\newtheorem{theorem}{Theorem}
\newtheorem{corollary}[theorem]{Corollary}
\newtheorem{definition}[theorem]{Definition}
\newtheorem{lemma}[theorem]{Lemma}
  \newtheorem{assumption}{Assumption}
\newcommand{\point}{.}
\title{Nonparametric density estimation \\ by histogram trend filtering}
\author{Oscar Hernan Madrid Padilla \\
\emph{University of Texas at Austin}\\
\textit{oscar.madrid@utexas.edu}\\
\\
James Scott\\
\emph{University of Texas at Austin}\\
\textit{james.scott@mccombs.utexas.edu}\\
\\
}
\date{This version: \today}
\begin{document}

\maketitle

\begin{abstract}
We propose a novel approach for density estimation called histogram trend filtering.  Our estimator arises from looking at surrogate Poisson model for counts of observations in a partition of the support of the data. We begin by showing consistency for a variational estimator for this density estimation problem. We then study a discrete estimator that can be efficiently found via convex optimization.  We show that the estimator enjoys strong statistical guarantees, yet is much more practical and computationally efficient than other estimators that enjoy similar guarantees. Finally, in our simulation study the proposed method showed smaller averaged mean square error than competing methods.  This favorable blend of properties makes histogram trend filtering an ideal candidate for use in routine data-analysis applications that call for a quick, efficient, accurate density estimate.

\bigskip

\noindent Key words: density estimation, penalized likelihood, trend filtering, Sobolev spaces
\end{abstract}

\begin{spacing}{1}

\section{Introduction}

\subsection{Nonparametric density estimation}

Consider the classic problem of one-dimensional density estimation, where we observe $y_i \sim f$ for $i=1, \ldots, n$ and wish to estimate $f$.  Most data-analysis practitioners that confront this problem turn to kernel density estimation, due to its familiarity, its computational efficiency, and its well-understood statistical properties.  Yet kernel methods are known to suffer from the local-adaptivity problem, wherein the used of a fixed bandwidth parameter may result in simultaneously undersmoothing and oversmoothing in different regions of the density.

A huge variety of methods have been proposed that improve upon basic kernel methods in a way that addresses this problem, from adaptive kernel bandwidths to penalized-likelihood estimation.  Yet these methods typically either incur a much higher computational burden than basic kernel methods, or else they involve hyperparameters that are difficult to specify and tune.  The goal of this paper is address this gap.  We propose a method called histogram trend filtering, which solves the adaptivity problem while simultaneously satisfying all three of the following criteria:
\begin{enumerate}
\item It is computationally efficient, even for large data sets.
\item It works out of the box, with no user-specified tuning parameters.
\item It has strong statistical guarantees.
\end{enumerate}
These three factors make our proposed method a strong candidate to replace ordinary kernel density estimation as the default ``first pass'' for data-analysis practitioners.

The histogram trend-filtering estimator is related to the following variational optimization problem based on penalizing the log likelihood $g(x) = \log f(x)$:
\begin{equation}
\label{eqn:basic_variational_problem}
\begin{aligned}
& \underset{g}{\text{minimize}}
& &
- \sum_{i=1}^n  g(y_i) \\
& \text{subject to}
& &
\int_{\mathcal{R}} e^{g} = 1  \\
&
& & J(g) \leq t \, ,
\end{aligned}
\end{equation}
where $J(g)$ is a known penalty functional.  Imposing an appropriate penalty can encourage smoothness and avoids estimates that are sums of point masses.

Specifically, we consider solutions to (\ref{eqn:basic_variational_problem}) for penalties based on total variation, as proposed by \citet{koenker:mizera:2007}.   We provide conditions under which explicit rates of convergence can be obtained for these estimators.  We also study a finite-dimensional version of this variational problem---histogram trend filtering---which involves two conceptually simple steps.  First, partition the observations into $D_n$ histogram bins with centers $\xi_1 < \cdots < \xi_{D_n}$ and counts $x_1, \ldots, x_{D_n}$.  Then assume the surrogate model $x_j \sim \mathrm{Poisson}(\lambda_j)$ and estimate the $\lambda_j$'s via polynomial trend filtering \citep{kim:boyd:etal:2009,tibs:2014a} applied to the Poisson likelihood.  The renormalized $\lambda_j$'s then may be used to form an estimate of $f_0$.

Our results show that this simple, computationally efficient procedure yields excellent performance for density estimation.  Our main theorems characterize how the optimal bin size must shrink as a function of $n$  to ensure consistency for estimating $f_0$, and provide bounds on the proposed procedure's reconstruction error under the assumption that the bins are chosen accordingly.  Our empirical results also show that the histogram trend-filtering estimator is adaptive to changes in smoothness of the underlying density when familiar information criteria are used to choose the method's single tuning parameter.  Put simply, in can yield an estimate that is simultaneously smooth in some regions and spiky in others.  This behavior contrasts favorably with kernel density estimation, where the bandwidth parameter governs the global smoothness of the estimate.

% One of the results in \cite{silverman1982estimation}

\subsection{Histogram trend filtering}
\label{hist_trend_filtering}

The idea of histogram trend filtering is to reduce the density estimation problem to that of a nonparametric Poisson regression problem, which is solved by trend filtering \citep{kim:boyd:etal:2009,tibs:taylor:2011}.  The method is so computationally efficient for two reasons: (1) because binning the data results in a huge reduction from data points to bin counts, and (2) because the trend-filtering estimator for a Poisson regression can be obtained so cheaply, using the extraordinarily fast ADMM algorithm of \citet{ramdas:tibs:2014}.  An important point for us to demonstrate is that the data reduction step can be done without losing too much information; we address this concern later.

 Let us now construct in detail the histogram trend-filtering estimator, which can be viewed as a discrete approximation to Problem (\ref{eqn:basic_variational_problem}) when $J$ penalizes the total variation of $g$ or higher-order versions thereof.  We begin with several assumptions made for ease of exposition.  Let $\mathcal{X} \subset \mathcal{R}$ denote the  support of $f_0$. Suppose that $\mathcal{X}$ is a compact set that it is partitioned into $D_n$ disjoint intervals $I_j$ with midpoints $\xi_j$, such that $\bigcup_j I_j = \mathcal{X}$.  We assume that the intervals are of equal length $\delta_n$ and ordered so that $\xi_1 < \cdots < \xi_{D_n}$.  Any of these assumptions can be relaxed in practice. % From a theoretical standpoint the only important assumption is the compactness of $\mathcal{X}$; in the next section, we provide results with this assumption.

Now consider a histogram of the observations using bins $I_j$.  Let $x_j = \#\{ x_i \in I_j\}$ denote the histogram count for bin $j$, and consider the surrogate model
\begin{equation}
\label{eqn:surrogate_poisson_model}
x_j \sim \mbox{Poisson}(\lambda_j) \; , \quad \lambda_j =  n \delta_n f(\xi_j)  \approx  n \int_{I_j} f_0(y) \ d y \, .
\end{equation}
Let $\theta_j = \log \lambda_j$ be the log rate parameter for bin $j$, let $\theta = (\theta_1, \ldots, \theta_{D_n})$, and define the loss function
$$
l(\theta) = \sum_{j=1}^{D_n} \left\{
e^{\theta_j} - x_j \theta_j \right\}
$$
as the negative log likelihood corresponding to Model (\ref{eqn:surrogate_poisson_model}).  We propose to estimate $\theta$ using the solution to the unconstrained optimization problem
\begin{equation}
\label{eqn:histTF_objective}
\begin{aligned}
& \underset{\theta \in \mathcal{R}^D}{\text{minimize}}
& &
l(\theta) + \tau \Vert \Delta^{(k+1)} \theta \Vert^p_q \, ,
\end{aligned}
\end{equation}
where $\Delta^{(k+1)}$ is the discrete difference operator of order $k$.  Concretely, when $k=0$, $\Delta^{(1)}$ is the matrix encoding the first differences of adjacent values:
\begin{equation}
\label{eqn:fusedlassoD}
\Delta^{(1)} =\left(\begin{array}{rrrrrr}
1 & -1 & 0 & 0 & \mathbf{\cdots} & 0\\
0 & 1 & -1 & 0 & \cdots & 0\\
\vdots &  &  &  & \ddots & \vdots \\
0 & \cdots &  & 0 & 1 & -1
\end{array}\right).
\end{equation}
For $k \geq 1$ this matrix is defined recursively as $\Delta^{(k+1)} = \Delta^{(1)} \Delta^{(k)}$, where $\Delta^{(1)}$ from (\ref{eqn:fusedlassoD}) is of the appropriate dimension.

We focus on problem (\ref{eqn:histTF_objective}) when $q=p=1$, which corresponds to the polynomial trend-filtering estimator under a Poisson likelihood. Intuitively, the trend-filtering estimator is similar to an adaptive spline model: it places a lasso penalty on a discrete analogue of the order-$k$ derivative of the underlying log-density, resulting in a piecewise polynomial estimate whose degree depends on $k$.  Trend filtering has been studied extensively in the context of function estimation, generalized linear models, and graph denoising \citep{kim:boyd:etal:2009,tibs:taylor:2011,wang2014trend}.

The goal of this paper is to understand the statistical properties of this method as an approach to density estimation, and therefore we do not discuss details of implementation.  However, we note that problem (\ref{eqn:histTF_objective}) can be solved efficiently when $q=p=1$ using the augmented-Lagrangian method from \cite{ramdas:tibs:2014}, as implemented in the \verb|glmgen| R package \citep{glmgen:2014}.  When $q=p=2$, the objective is differentiable, and any standard gradient-based or quasi-Newton optimization method may be used.

\section{Connections with previous work}

In this section we present a brief review of density estimator related to our methods. We begin by discussing the seminal work from \cite{good:gaskins:1971} which can be motivated from a Bayesian perspective. This starts  by considering the  prior

$$
p(f)  \propto \exp \left ( - \Phi(f) \right ) \mathbb{I} \left ( f \in \mathcal{A} \right )
$$
where $ \Phi $ is a roughness penalty and $\mathcal{A} $ is some class of density functions.  Then, given the usual likelihood

\[
     p(x \mid  f)  = \prod_{i=1}^{n} f(x_i),
\]
the authors in \cite{good:gaskins:1971} produce  a maximum a posteriori (MAP) estimate of $f_0$  by solving
\begin{equation}
\label{MAP_problem}
\hat{f} = \text{argmin}_{ f \in \mathcal{A} } - \log p( x \mid f ) + \Phi (f) \, .
\end{equation}
This is the main focus of study in \cite{good:gaskins:1971},  where one of the choices of roughness penalty is proportional to Fisher's information concerning the displacement or location, regarded as a parameter:

\begin{equation}
\label{penalty}
   \Phi(f)   = \int_{-\infty}^{\infty} \frac{(f^{\prime}(x))^{2}}{f(x)}dx
\end{equation}

The  consistency properties of the estimator (\ref{MAP_problem}) were briefly studied in \cite{good:gaskins:1971}, where the authors showed convergence in the sense of probability of integrals of the form

\[
   \int_{a}^{b} \hat{f}(x)dx   \rightarrow  \int_{a}^{b} f_0(x)dx.
\]
However, this does not imply the absence of false bumps: they could become small and numerous as $n$ increases.   A  more complete characterization of the estimator $\hat{f}$  with the choice of penalty (\ref{penalty}) was given in \cite{de1975nonparametric}. There, the main result is the proof of the existence and uniquenes of $\hat{f}$. Moreover, the result holds with more generality allowing the class functions $\mathcal{A}$ to be a reproducing Hilbert space, and stating that if $\Psi$ is the square of the norm of such reproducing space, then $\hat{f}$  exists and it is unique.

In  a variation of the estimator from \cite{good:gaskins:1971},  \cite{silverman1982estimation} works within the framework  of roughness penalty. However, rather than penalties directly imposing constraints on the density space,  \cite{silverman1982estimation} proposes to penalize the log density. This is immediately attractive since it automatically imposes a positive constraint in the estimates, with the formal formulation given as

\begin{equation}
\label{silverman}
\begin{array}{ll}
    \underset{ g }{ \text{minimize} } &  -\frac{1}{n}\,\sum_{i= 1}^{n} g(x_i) + \frac{1}{2} \lambda\,\Phi(g)\\
      \text{subject to} & \int e^{g(\mu)}d\mu   = 1.
 \end{array}
\end{equation}
 The roughness penalties studied in \cite{silverman1982estimation} are of the form

 \[
     \Phi(g) =  \int_{0}^{1} \left[D(g )(\mu)\right]^2\,d(\mu)
 \]
where $D(g)$ is a function of the first $m$ derivatives of $g$, see \cite{silverman1982estimation} for the specific construction. There, Theorem 3.1 also shows that (\ref{silverman}) is equivalent to the unconstrained problem

\begin{equation}
\label{silverman2}
    \underset{ g }{ \text{minimize} } \,  -\frac{1}{n}\,\sum_{i= 1}^{n} g(x_i) + \frac{1}{2} \lambda\,\Phi(g) +  \int e^{g(\mu)}d\mu.  \\
\end{equation}
This alternative formulation has the nice feature that can be formulated as a convex optimization problem, see \cite{o1988fast}.

It turns out that a similar result can easily be proven for our Poisson surrogate problem. This is given in the following Theorem.

\begin{theorem}
\label{equivalent_formultion}
With the notation from Section \ref{hist_trend_filtering}, it can be proven that there exists a constant $c>0$  such that $\hat{\theta}$ solves (\ref{eqn:histTF_objective})  if only if $\hat{g}_i =  \hat{\theta}_i - \log(n\,\delta_n) $   solves
\begin{equation}
\label{constrained_problem}
\begin{array}{ll}
    \underset{ g }{ \text{minimize} } &  -\frac{1}{n}\,\sum_{i= 1}^{D_n} x_i\,g_i   \\
      \text{subject to} & \sum_{i=1}^{D_n} \delta_n\,e^{g_i}   = 1,\,\,\,\,   \| \Delta^{(k+1)}g \|_q^p \leq c. \\
   % &      ,
 \end{array}
\end{equation}
%for some constant $c$.
\end{theorem}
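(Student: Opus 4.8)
The plan is to reduce the unconstrained objective in (\ref{eqn:histTF_objective}) to a normalization-constrained penalized likelihood in the shifted variables, and then to convert the penalty into the inequality constraint by an elementary comparison. This mirrors the Silverman equivalence between (\ref{silverman}) and (\ref{silverman2}), but the bookkeeping is cleaner in the discrete Poisson setting.

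First I would substitute $\theta_i = g_i + \log(n\delta_n)$ directly into the loss $l(\theta)$. Using $e^{\theta_i} = n\delta_n e^{g_i}$ and the fact that the bin counts exhaust the sample, $\sum_i x_i = n$, the loss becomes
$$
l(\theta) = n\delta_n \sum_i e^{g_i} - \sum_i x_i g_i - n\log(n\delta_n).
$$
Since $\Delta^{(k+1)}\mathbf{1} = 0$ (the difference operator annihilates constants, by induction from $\Delta^{(1)}\mathbf{1}=0$), the penalty is invariant under this shift: $\|\Delta^{(k+1)}\theta\|_q^p = \|\Delta^{(k+1)}g\|_q^p$. Dropping the additive constant $-n\log(n\delta_n)$ and dividing by $n>0$ (neither operation changes the set of minimizers), minimizing (\ref{eqn:histTF_objective}) over $\theta$ is equivalent to minimizing
$$
F(g) = \delta_n\sum_i e^{g_i} - \frac1n\sum_i x_i g_i + \frac\tau n\|\Delta^{(k+1)}g\|_q^p
$$
over $g \in \mathcal{R}^D$, under the correspondence $g = \theta - \log(n\delta_n)\mathbf{1}$.

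The key step is to show that the normalization constraint holds automatically at any minimizer of $F$. Here I would use a constant-shift argument: replacing $g$ by $g+c\mathbf{1}$ leaves the penalty untouched and turns $F$ into $\delta_n e^{c}\sum_i e^{g_i} - \frac1n\sum_i x_i g_i - c + \frac\tau n\|\Delta^{(k+1)}g\|_q^p$, a strictly convex function of the scalar $c$ whose unique minimizer satisfies $\delta_n e^{c}\sum_i e^{g_i}=1$. Applying this at a global minimizer $\hat g$ of $F$ forces $c=0$ to be optimal, i.e.\ $\sum_i \delta_n e^{\hat g_i}=1$. Consequently the volume term equals $1$ at $\hat g$, so on the normalization set $F$ differs from $-\frac1n\sum_i x_i g_i + \frac\tau n\|\Delta^{(k+1)}g\|_q^p$ only by a constant, and minimizing $F$ over all of $\mathcal{R}^D$ coincides with minimizing this penalized likelihood subject to $\sum_i \delta_n e^{g_i}=1$.

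Finally I would set $c = \|\Delta^{(k+1)}\hat g\|_q^p$ and prove that the normalization-constrained penalized problem and the doubly-constrained problem (\ref{constrained_problem}) share the same minimizers, by a two-line comparison in both directions. For the forward direction, $\hat g$ is feasible for (\ref{constrained_problem}), and for any feasible $g^\star$ the penalty bound $\|\Delta^{(k+1)}g^\star\|_q^p \le c$ together with $F(\hat g)\le F(g^\star)$ gives $-\frac1n\sum_i x_i\hat g_i \le -\frac1n\sum_i x_i g^\star_i$, so $\hat g$ solves (\ref{constrained_problem}). Conversely, chaining the constrained-optimality of $g^\star$, the penalty bound $\|\Delta^{(k+1)}g^\star\|_q^p\le c$, and the penalized-optimality of $\hat g$ forces equality of the penalized objective at $g^\star$ and $\hat g$, so $g^\star$ is penalized-optimal. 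Both directions are elementary and rely only on convexity of the objective, not on strong duality of the constrained program. I expect the main subtlety to be the requirement $c>0$: this holds precisely when $\hat g$ is not a discrete polynomial of degree $\le k$, i.e.\ $\Delta^{(k+1)}\hat g \neq 0$, which I would argue is the case for non-degenerate count data; the degenerate situation where the trend-filtered fit is exactly polynomial must be excluded or handled as a separate boundary case.
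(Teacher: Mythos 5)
Your proof is correct and follows essentially the same route as the paper's: the change of variables exploiting $\sum_i x_i = n$ and shift-invariance of the penalty, enforcement of the normalization constraint by optimizing over constant shifts (your strict-convexity-in-$c$ argument is exactly the paper's substitution $g'_i = g_i - \log\bigl(\delta_n\sum_j e^{g_j}\bigr)$ combined with $t - \log t \geq 1$), and the identification $c = \|\Delta^{(k+1)}\hat{g}\|_q^p$ to pass between the penalized and constrained forms. The only differences are in your favor: your explicit two-sided comparison argument substantiates what the paper disposes of with a bare appeal to ``the KKT conditions,'' and your closing caveat about the degenerate case $\Delta^{(k+1)}\hat{\theta} = 0$ (which does occur for $\tau$ large enough, so the stated requirement $c>0$ can fail) flags a real boundary case that the paper's proof shares but does not acknowledge.
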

Thus,  we have shown that our Poisson surrogate problem is indeed a discretization of problem (\ref{eqn:basic_variational_problem}), where we replace the classical likelihood by a cross entropy objective, the integrability constraint by a constraint on the rectangle rule for the estimator, and the total variation penalty by a discrete version using difference matrices.

While our histogram trend filtering approach to density estimation might seem closely related to the estimator from  \cite{silverman1982estimation}, there are two significant differences. First,  as pointed out by  \cite{sardy2010density},  the estimator given by problem (\ref{silverman}) tends to over-smooth, since non-smoothness is penalized more heavily at low density values than at high density values, which may lead to uneven smoothing. The authors in \cite{sardy2010density} address this problem by imposing a a total variation penalty. Thus, giving rise to the problem

\begin{equation}
\label{sardy}
\begin{array}{ll}
    \underset{ f }{ \text{minimize} } &  -\frac{1}{n}\,\sum_{i= 1}^{n}\log(f_i) +  \lambda\,\sum_{i=2}^{n}\vert f_i - f_{i-1} \vert   \\
      \text{subject to} &   a^T\,f   = 1\\
 \end{array}
\end{equation}
for some integration coefficient vector $a$ and parameter $\lambda > 0$. The main motivation for this problem is to avoid the over-smooth solutions from solving (\ref{silverman}). Hence, given the flexibility of imposing $\|\cdot\|_q^p$, our histogram trend filtering estimators are also expected to avoid over-smoothing  by taking $p=q =1$.  However, the other important issue associated with the estimator given by (\ref{silverman}) is the computational complexity. This is also shared by the estimator from \cite{sardy2010density} since both of these procedures require to estimate a vector in $R^n$. In contrast, we solve optimization problems in a significantly lower dimensional space, $R^{D_n}$.

Next we observe that by its mere definition in Problem \ref{eqn:histTF_objective}, when $p=q=1$, our density estimator  provides piecewise polynomial solutions in the log-space. Here, the parameter $k$ in the difference matrix indicates the degree of the polynomial approximation used. For instance, $k=0$  corresponds to piecewise constant solutions,  while $k=1$ to piecewise linear solutions. An attractive feature of our method is that it is not necessary to specify the the locations of break points; this is done adaptively by solving a convex optimization problem. In contrast,  \cite{barron1991approximation} consider fitting splines in the log-space but this requires specification of the locations of the of the knots. Moreover, \cite{barron1991approximation} provides rates of convergence for such spline estimators, in terms of the Kullback-Leiber divergence, when the true density satisfies
\begin{equation}
\label{splines}
   \int \vert\left(\log f_0\right)^{(k+1)}\vert^{2} <  \infty
\end{equation}
with the superscript $(k+1)$  denoting the ($k+1$)-the derivative. While we do not explicitly require this condition on the true density for the subsequent analysis,  we do work we spaces of densities  for which

\[
      \int  \vert\left(\log f\right)^{(k+1)}\vert^{q}  < \infty,
\]
and $q \in \{1,2\}$. Thus our framework includes the less restrictive case $p= 1$. In fact, when $p=1$, our result in Theorem \ref{theorem_2}  provides convergence rates for our discrete estimator and this directly incorporates the smoothness of the true log-density captured by a difference operator. This is of interest given that structure is lost when we move from $p=2$ to $p=1$ since $L_1$ does not come with an inner product.

Finally, we review  the penalized estimator from \cite{willett2007multiscale}.  This is obtained by solving the problem

\begin{equation}
\label{willet}
\begin{array}{llll}
 \hat{f}_{W}  &  = &  \underset{f}{\text{arg min}} &  -\frac{1}{n}\,\sum_{i= 1}^{n}\log(f(y_i)) +  \text{pen}(f)\\
  &  & \text{subject to }  &  \int f = 1,\,\, f\in \mathcal{C}
\end{array}
 %\hat{f}_{W} = \underset{f}{\text{arg min}} \begin{array}{ll}\left\{ -\frac{1}{n}\,\sum_{i= 1}^{n}\log(f(y_i)) +  \text{pen}(f),\,\,\,\,\, \text{subject to}\,\, \int f = 1,\,\, f\in \mathcal{C}   \right \}
    %\underset{ f }{ \text{minimize} } &     \\
    %  \text{subject to} &     \\
%\end{array}
\end{equation}
where $\mathcal{C}$ is a class of non-negative piecewise polynomials and $\text{pen}(f)$ is a functional that penalizes the complexity of polynomials. The solution to (\ref{willet}) enjoys attractive theoretical properties,  \cite{willett2007multiscale}  shows that if $f_0 $  is a member of the Besov space $B_{q}^{\alpha}\left( L_{p}([0,1])\right)$
 where $\alpha > 0$, $1/p = \alpha + 1/2$ and $0 < p < q$, then,
 %. While comes with strong guarantees and solves the

 \[
      E\left[  \| f_0^{1/2} -  \hat{f}_{W}^{1/2}  \|_2^2 \right] \leq  C\left(  \frac{\log_2^2(n) }{n} \right)^{\frac{2\alpha}{2\alpha +1}}.
 \]
 Moreover, the estimator $\hat{f}_W$ involves using  recursive dyadic partitions in order
to produce near-optimal, piecewise polynomial estimates,
analogous to the methodologies in \cite{breiman1984classification,kolaczyk2004multiscale}  and \cite{donoho1997cart}. Also, this  multiscale method provides spatial adaptivity similar to
wavelet-based techniques \citep{donoho1995wavelet,kerkyacharian1996lp}, with a notable advantage.
Wavelet-based estimators can only adapt to a function’s
smoothness up to the wavelet’s number of vanishing moments;
thus, some a priori notion of the smoothness of the true
density or intensity is required in order to choose a suitable
wavelet basis and guarantee optimal rates. The estimator $\hat{f}_W$, in contrast, automatically adapts to arbitrary degrees
of the function’s smoothness without any user input or prior information.  However, this penalized method requires elevated computational effort. Specifically, it it involves  $O(n\,\log_2(n))$ calls to a convex
minimization routine and $O\left(n\,\log_2 (n)\right)$ comparisons of the resulting
(penalized) likelihood values. The goal of this paper is to provide a computationally  efficient estimator that can adapt to different smoothness of the true density and that comes with statistical guarantees.

\section{Main results}

\subsection{Variational formulation and  rates of convergence}

 Now we present rates of convergence based on viewing (\ref{eqn:histTF_objective}) as a variational problem. To that end, we observe that (\ref{eqn:histTF_objective}) can be thought as a discrete approximation to

%\[
%  \underset{g}{\text{minimize}}\,\,- \sum_{i=1}^{D_n} x_i\,g(\xi_i)\,\,\,\text{   subject to }\,\,\,\,\int_{\mathbb{R}} e^{g(\mu)} \ d \mu = 1, \,\,\, \int_{\mathbb{R}} \vert g^{(k+1)}(\mu) \vert^q  d\mu \leq t.
%\]
\[
\begin{array}{ll}
    \underset{g}{\text{minimize}} & - \sum_{i=1}^{D_n} x_i\,g(\xi_i)\\
     \text{   subject to }  &  \int \vert g^{(k+1)}(\mu) \vert^q  d\mu \leq t   \\
     &    \,\,\int e^{g(\mu)} \ d \mu = 1.
\end{array}
\]
% _{\mathbb{R}}

This formulation is more general than that of (\ref{eqn:basic_variational_problem}), since one can always choose the number of points in each bin to be equal to one and replace the mid-points $(\xi_j)_{j=1}^{D_n}$ by the actual observations   $(x_i)_{i=1}^{n}$.   When $n$ is relatively small, we recommend just this.  However, when $n$ is large this becomes computationally burdensome.  Since we should expect to lose information by considering counts within bins instead of the observations, it is natural to ask if we can provide error bounds for both situations and to see how these  differ. Our next theorem gives light on this point.

To state such result, we first introduce some notation and make some assumptions. These assumptions are designed to ensure that the set over which we constrain the minimization is compact with respect to the supremum norm. The idea of proving consistency results
for non-parametric maximum likelihood estimators over a sequence of reduced spaces is known as the method of sieves.  This technique was introduced by \cite{geman1982nonparametric} and has been further studied in the literature in \cite{birge1998minimum}, \citet{shen1994convergence}, and \citet{shen1997methods}.

Given a function $h$ with domain $\Omega$, we say that $h$ is  $L$-Lipschitz if $|h(x) - h(y)| \leq L\,\vert x - y \vert^{\alpha} $  for all $x,y \in \Omega$.  The order-$l$ weak  derivative of $h$ is denoted by $h^{(l)}$. The set of log-Sobolev densities in $\Omega  = (0,1)$ is

$$\mathcal{P} := \left\{ h \,\,:\,  \,\,h \in W^{k+1,p}\left(\Omega\right),\,\, \,\, \int_{0}^{1} e^{h(\mu)}d\mu = 1, \right\}$$

where $d\mu$ denotes the Lebesgue integral and $W^{k+1,p}\left(\Omega\right)$ is the Sobolev space of order $k+1,p$ in $\Omega$. Recall from \cite{oden2012introduction}  that  $W^{m,p}(\Lambda)$, the Sobolev space of order $m,p$ on a bounded set $\Lambda \subset \mathbb{R}$, is defined as the set of functions  $u :  \Lambda \rightarrow \mathbb{R}$ such that

\[
   \|u^{(\alpha)}\|_{L^p(\Lambda)} := \left(\int_{\Lambda} \vert u^{(\alpha)} \vert^p\right)^{1/p}  < \infty
\]
for all $\alpha \in \{0,1,\ldots,m\}$, where $u^{(\alpha)}$   is the $\alpha$-th weak derivative of $u$. %We also denote the $\|\cdot\|_{p}$ in the space $\Omega$

On the other hand, for an open set $\Lambda \subset \mathbb{R}$, we denote by $\bar{C}(\Lambda)$ as the set of continuous function with the finite suppremum norm

\[
  \|u\|_{L^{\infty}(\Lambda)} = \underset{x \in \Lambda}{\sup } \vert u(x) \vert.
\]
Note that the $\bar{C}(\Lambda)$ do not require its elements to be uniformly continuous but only the relaxed condition that they are bounded in the open set $\Lambda$. Finally, for a set $S \subset \bar{C}(\Lambda)$ we denote by $\text{cl}_{\Lambda}(S)$ its closure in $\bar{C}(\Lambda)$ with respect to $ \|\cdot\|_{L^{\infty}(\Lambda)}$.
% We write $b \in T_h$  if  $ \left\{\text{log}(h)\right\}^{(k+1)} $ is

%$b-$Lipschitz and
%\[
 %  \text{max}\left[ \left\| h \right\|_{\infty},  \left\vert \left\{\text{log}(h)\right\}^{(1)}(0) \right\vert,\ldots,\left\vert \left\{\text{log}(h)\right\}^{(k)}(0) \right\vert, %   \ \left\| \left\{\text{log}(h)\right\}^{(k+1)} \right\|_{\infty}  \right]  \leq b,
%\]
%where $\|\cdot\|_{\infty}$ is the usual uniform norm in $\text{Support}(h)$.  Also, given two positive random variables $X$ and $Y$, we write $X = O_{P}\left(Y\right)$ if  for all $\epsilon > 0$ there exists $M >0$  such that $\text{Pr}\left(X > M\,Y\right) < \epsilon $. Moreover, for a vector $v = (v_1,\ldots,v_m)$ and a function $h$, we denote by $h(v)$ the vector with coordinates  $(h(v_1),\ldots,h(v_m))$.

\begin{assumption}\label{as:1}
 We consider $l_n$, $T_n$   be positive sequences of numbers to be defined later. For now we only assume that these sequences are bounded by below.
 %such that $C_n = O(1)$, $T_n  = O(1)$, $a_n \rightarrow 0 $, and  $b_n \rightarrow \infty$. The sequence $b_n$ can grow as fast as desired.
\end{assumption}

\begin{assumption}\label{as:2}
We assume that the true density $f_0$ has support in $[0,1]$ and  is $r$-Lipschitz for some positive constant $r$.%  and that there exists $b \in T_f$.
%We  denote by $\mathcal{P}$ the set of density functions in $[0,1]$, thus $\mathcal{P} := \{ h : \int_{0}^{1} h(\mu)d\mu = 1, h >  0 \}$, where $d\mu$ denotes the Lebesgue integral. Moreover, given any function $h$ we say that $b \in T_h$ if  $ \left\{\text{log}(h)\right\}^{(k+1)} $ ((k+1) denoting the $k+1$-derivative) is $b-$Lipschitz and
%\[
%   \text{max}\left[ \left\| h \right\|_{\infty},  \| \left\{\text{log}(h)\right\}^{(1)}\|_{\infty},\ldots, \| \left\{\text{log}(h)\right\}^{(k+1)}\|_{\infty}  \right]  \leq b,
%\]
%where $\|\cdot\|_{\infty}$ is the usual uniform norm in $[0,1]$  and the superscript $(l)$ on a function indicating its $l$-th derivative. With this notation we assume that there exists $b \in T_f$, $f$ the true unknown density.
\end{assumption}

%\begin{assumption}\label{as:3}
%The true density satsifies $\int_{0}^{1} \vert \text{log}\left(f\right)^{(k)}(\mu) \vert^2 <  \infty$, $\|f\|_{\infty} $, $\|\text{log}\left(f\right)\|_{\infty} < \infty$.
%\end{assumption}

\begin{definition}\label{as:3}

We define the sets $S_{n,1}$ and $S_{n,2}$ as
\[
\begin{array}{lll}%\text{cl}
S_{n,i} & = & \left\{e^h \,:\,   h  \in  logS_{n,i} \right\}
%S_{n,i} & = & \bigg\{ h \in  \mathcal{P}_{a_n} \,:\, \,   \mathlarger\int_{-a_n}^{a_n} \left\vert \text{log}\left(h\right)^{(k+1)}(\mu) \right\vert^i d\mu \leq  T_n,\,\, \|h\|_{\infty} \leq C_n,\,\,\, b_n \in T_{h},\\
%& & \,\,\, \,\left\|\text{log}(h)\right\|_{\infty} \leq  T_n \,\bigg\}.
\end{array}
\]
with
\[
  logS_{n,i} = \text{cl}_{\Omega}\left(  \mathcal{P} \cap \left\{ h\, :\,\,  \left\|h\right\|_{L^\infty(\Omega)}\leq T_n,\,\,\left\|h^{(k+1)}\right\|_{L^{i}(\Omega)}^i  \leq T_n,  \,\,\,h \,\,\,\text{is continuous}  \right\}  \right).
\]
%where $u_{n,2} =  (2\,)^{2k+1} $  and $u_{n,1} = \,T_n$  for all $n$.
\end{definition}

\begin{definition}\label{as:4}
For $i=1,2$  we construct

\[
\begin{array}{lll}
S_{n,i}^{\prime} & = & S_{n,i} \cap \left\{e^h \,:\,\,\,\,  h \in \bar{C}(\Omega), \,\, h \,\,\text{ is  } r\,e^{-T_n}\text{-Lipschitz},\,\,e^h \geq l_n \,\right\}
%\text{cl}\left(\left\{ h \in S_{n,i} \,:\, \,    l_n \leq  \vert h\vert,\,\,\,  h \,\,\text{ is  } r\text{-Lipschitz}  \,\,\right\}\right)
\end{array}
\]
%and assume that $f \in S_{n,i}^{\prime}$  for $i=1,2$,  and large enough $n$.
%and assume that there exists $q_{n,i} \in S_{n,i}$  for which
%\[
%    \delta_{n,i}  := 2\,\int_{0}^{1}\,   f(\mu)\,\{q_{n,i}(\mu)\}^{-1/2}\, \left[ \{f(\mu)\}^{1/2} - \{q_{n,i}(\mu)\}^{1/2} \right] d\mu,
%\]
%satisfies $\text{log}\left(1+ \delta_{n,i}\right) = O(n^{-1})$, $i=1,2$.
%S_n^i & = & \Big\{ h \,:\, \, \int_{0}^{1}\vert \text{log}\left(h\right)^{(k+1)}(\mu) \vert^i d\mu \leq  T_n,\,\, \|h\|_{\infty} \leq C_n,\,\,
%  \,\,\,\,h \in \mathcal{P},\,\, b_n \in T_{h} \,\Big\}
%and we assume that $f\in S_n^2$  for large enough $n$, $i =1,2 $.
\end{definition}

Using the assumptions and notation above, we study $M_{n,i}$,  the solution set of the problem

\begin{equation}
\label{constrained_problem_sn}
\begin{aligned}
& \underset{g}{\text{minimize}}
& &
- \sum_{j=1}^{n} \left\{  \log \left(f\right)\left(y_j\right)  \ d \mu   \right\} \\
& \text{subject to}
& &  f \in S_{n,i},
\end{aligned}
\end{equation}
for $i = \{1,2\}$. We also consider the case when $S_{n,i}$ is replaced by $S_{n,i}^{\prime}$  and the objective is replaced by a weighted likelihood

\begin{equation}
\label{constrained_problem_sn_prime}
\begin{aligned}
& \underset{g}{\text{minimize}}
& &
- \sum_{j=1}^{D_n} \left\{  x_j\,\log \left(f\right)\left(\xi_j\right)  \ d \mu   \right\} \\
& \text{subject to}
& &  f \in S_{n,i}^{\prime},
\end{aligned}
\end{equation}

in which case the respective solution set is denoted by  $M_{n,i}^{\prime}$. We proceed to state convergence rates proved using entropy techniques as in \cite{van1990estimating,mammen1991nonparametric,wong1995probability,ghosal2001entropies}.

 %minimization constrained to $S_{n,i}$ and $S_{n,i}^{\prime}$, $i=1,2$. This is proven using entropy techniques as in \cite{van1990estimating,mammen1991nonparametric,wong1995probability,ghosal2001entropies}.%Van De Geer 1990, Goshal and

\begin{theorem}
\label{convergence_rates}

\begin{description}

\item[Part A.]  Assume that $T_n\,e^{T_n} = O((\log n)^{4q} )$ where $q> 1/2$. If there exists a sequence $q_{n,2} \in S_{n,2}$  and

\[
    \int_{\Omega} f_0(\mu)\{q_{n,2}(\mu)\}^{-1/2}\left[f_{0}(\mu)^{1/2} - q_{n,2}(\mu)^{1/2}  \right]d\mu = O(1/n).
\]
%Let $d$ be the square root of the usual Hellinger distance. Then for any $1 \geq q \geq 1/2$, with the assumptions (1,2,4), we have that  $M_n$,  the solution set to the problem
%\[
%  \underset{h \in S_{n,2}^{\prime}}{\text{minimize}}\,\,-  \sum_{j=1}^{D_n}  x_j\,\text{log}\{h(\xi_j)\}\,\,\,
%\]
Then, $M_{n,2}^{\prime}$ is not empty, and if $D_n =  n^{1/s}$ for $1< s $  with $l_n\,n^{1/s}\, \geq n^{\alpha}$, $0 < \alpha < 1/s$, then there exists positive constants $c_1$, $c_2$ and $C_2$  (independent of $f_0$) such that for large enough $n$,
\[
  \begin{array}{lll}
   \text{P}^{*}\left[ \underset{\hat{f}_n \in M_{n,2}^{\prime}}{\text{sup }} d\left(\hat{f}_n,f_0\right) \geq C_2\{\text{log}(n)\}^q\,n^{-\alpha/2}     \right] & \leq  &  c_2\,\text{exp}\left[\,-c_1\,\{\text{log}(n)\}^{2q}\,n^{1-\alpha} \right].
  \end{array}
\]
Moreover, if we replace $S_{n,2}$  by $S_{n,2}^{\prime}$, and set $\alpha = 1$, we obtain the same concentration bound for $M_{n,2}$.
\item[Part B.]  Suppose that $D_n = n^{1/s}$  with $s>1$, $\left( \,T_n\,e^{T_n}\right)^{1/(2k+2)} =  O(n^{r})$ for some $r \in (0,1/2)$, and there exists $q_{n,1} \in S_{n,1}$  such that

    \[
    \int_{\Omega} f_0(\mu)\{q_{n,1}(\mu)\}^{-1/2}\left[f_{0}(\mu)^{1/2} - q_{n,1}(\mu)^{1/2}  \right]d\mu = O(1/n).
\]
If $D_n =  n^{1/s}$ for $1< s $  with $l_n\,n^{1/s}\, \geq  n^{\alpha}$, $0 < \alpha < 1/s$, then,  for all
\[
0 < t <  \min\left\{\alpha/2 , \frac{1/2 - r}{ 1 + 2/(k+1)} \right\},
\]
we have that $M_{n,1}^{\prime} \neq \emptyset$  and

    \[
       \text{P}^{*}\left( \underset{\hat{f}_n \in M_{n,1}^{\prime}}{\text{sup }} \sqrt{H^{2}\left( f_0^{1/2} , f_n^{1/2} \right)}   > C_1 \frac{1}{n^{t  }} \right)  \leq  e^{-n^{1-2t}},
    \]
        with $C_1$  a positive constant independent of $f_0$. Moreover, the same conscentration bound holds for $M_{n,1}$ without requiring the conditions imposed by $\alpha$.
 %Assume that
% Moreover, if we choose $\alpha > 1/\{1+ 2/(k+1)\}$ and  replace $S_{n,1}^{\prime}$ by $S_{n,2}^{\prime}$ there exist constants $c_3$ and $c_4$ satisfying

%\[
%  \begin{array}{lll}
%   \mathrm{P}\left\{ \underset{\hat{f}_n \in M_{n,1}^{\prime}}{\text{sup }} d\left(\hat{f}_n,f_0\right) \geq \,n^{-1/\left\{2 +  4/(k+1)\right\}}     \right\} & \leq  &  %   c_3\,\text{exp}\left\{\,-c_4\,n^{2/(k+3)} \right\}.
%  \end{array}
%\]
%Finally, for $i=1,2$ if we replace $S_{n,i}^{\prime}$ by $S_{n,i}$, set $\alpha =1$ , we obtain the same concentration bounds for the solutions to problem (\ref{constrained_problem_sn}).

%\[
 % \underset{h \in S_{n,i}}{\text{minimize}}\,\,-  \sum_{j=1}^{n}  \text{log}\{h(y_j)\}.\,\,\,
%\]
\end{description}
\end{theorem}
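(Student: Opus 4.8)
The plan is to treat both parts as applications of the sieve–MLE machinery of \cite{wong1995probability}, in which the Hellinger rate and its exponential concentration are extracted from a single metric–entropy integral inequality. First I would settle feasibility: each constraint set $\mathrm{cl}_\Omega(\cdots)$ defining $S_{n,i}$ (or $S_{n,i}^{\prime}$) is bounded in $L^\infty$, has a bounded $(k+1)$-st weak derivative, and, in the primed case, is equi-Lipschitz and bounded below by $l_n$; hence it is compact in $\bar C(\Omega)$ for $\|\cdot\|_{L^\infty(\Omega)}$, and since $f\mapsto -\sum_j x_j\log f(\xi_j)$ is continuous on it, a minimizer exists and $M_{n,i}^{\prime}\neq\emptyset$. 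The assumed existence of $q_{n,i}\in S_{n,i}$ with
$$\int_\Omega f_0\,q_{n,i}^{-1/2}\bigl[f_0^{1/2}-q_{n,i}^{1/2}\bigr]\,d\mu = O(1/n)$$
is exactly the approximation hypothesis of \cite{wong1995probability}: it certifies that the Kullback--Leibler-type discrepancy between $f_0$ and the sieve is $O(1/n)$, hence negligible against the target squared rate, so the bias does not spoil the conclusion.

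The core of the proof is a bracketing-entropy bound for the sieve in Hellinger distance. I would pass to the log scale and estimate $H_B(u,S_{n,i}^{\prime})$ from the smoothness constraint $\|h^{(k+1)}\|_{L^i}^i\le T_n$ together with $\|h\|_{L^\infty}\le T_n$, the equi-Lipschitz bound, and $e^h\ge l_n$. Converting a bracket for the log-density $h$ into a Hellinger bracket for $e^h$ costs a factor governed by $\sup e^h\le e^{T_n}$, while $l_n$ controls the reverse passage; this is why $T_n$ and $e^{T_n}$ appear together in the budgets. For $i=2$ (Part A) the log-densities occupy an $L^2$-Sobolev/Lipschitz ball whose entropy is polynomial in $1/u$ with constants depending on $T_ne^{T_n}$, whereas for $i=1$ (Part B) the $L^1$ constraint describes a strictly larger, bounded-variation-type ball with a coarser entropy scaling; this difference is precisely what downgrades the poly-log rate of Part A to the algebraic rate $n^{-t}$ of Part B.

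With the entropy in hand, I would fix the rate $\epsilon_n$ by solving the continuity-modulus inequality
$$\int_{\epsilon_n^2}^{\epsilon_n} H_B^{1/2}\bigl(u,S_{n,i}^{\prime}\bigr)\,du \;\lesssim\; \sqrt{n}\,\epsilon_n^2 ,$$
inserting the growth budgets: $T_ne^{T_n}=O((\log n)^{4q})$ forces $\epsilon_n\asymp(\log n)^{q}n^{-\alpha/2}$ in Part A, while $(T_ne^{T_n})^{1/(2k+2)}=O(n^{r})$ drives the exponent into the stated range $0<t<\min\{\alpha/2,(1/2-r)/(1+2/(k+1))\}$ in Part B; the lower bound, through $l_n n^{1/s}\ge n^\alpha$, is what pins the $n^{-\alpha/2}$ factor, since it fixes the smallest density value the log-likelihood must resolve at resolution $D_n=n^{1/s}$. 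Feeding $\epsilon_n$ into the likelihood-ratio inequality of \cite{wong1995probability} then yields the tail $\exp(-c\,n\epsilon_n^2)$, which is $\exp[-c_1(\log n)^{2q}n^{1-\alpha}]$ in Part A and $e^{-n^{1-2t}}$ in Part B; the bound is stated as a supremum over the solution set because every minimizer satisfies the same basic inequality. For the primed problems I must further bridge the true empirical measure $\tfrac1n\sum_i\delta_{y_i}$ and the weighted measure $\tfrac1n\sum_j x_j\delta_{\xi_j}$ built from bin counts at midpoints: using the equi-Lipschitz constraint $r e^{-T_n}$ and the $r$-Lipschitz property of $f_0$ from Assumption \ref{as:2}, each evaluation $\log f(y_i)$ moves by $O(\delta_n)$ when $y_i$ is replaced by its midpoint, so the objective is perturbed by an amount negligible beside $\epsilon_n^2$. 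For the unprimed sets $M_{n,i}$ this discretization step disappears and the lower-bound constraints are absent, so one may take $\alpha=1$, which is exactly the ``moreover'' clauses.

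The step I expect to be hardest is the entropy bound under the unboundedness of the log-likelihood: since $f\mapsto\log f$ is not Lipschitz near $0$, a Hellinger bracket does not translate directly into a usable bracket for $\log f$, and the whole purpose of $l_n$ and $T_n$ is to regularize this. Tracking how the factors $e^{T_n}$ and $1/l_n$ propagate through the entropy integral, and verifying they are absorbed by the prescribed budgets so the rate equation still balances, is the delicate part, most acutely in Part B, where the larger $L^1$ ball leaves a thinner margin and thereby restricts the achievable exponent $t$.
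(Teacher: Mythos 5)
Your proposal follows the same architecture as the paper's proof: existence of $M_{n,i}$, $M_{n,i}^{\prime}$ from compactness of the sieves in $\left(\bar C(\Omega),\|\cdot\|_{L^{\infty}(\Omega)}\right)$ via the Sobolev embedding; the condition on $q_{n,i}$ serving as the approximation (bias) hypothesis of \citet{wong1995probability}; bracketing entropy of the sieves fed into their entropy-integral inequality to produce tails of the form $\exp(-c\,n\,\epsilon_n^2)$; and, for the primed (binned) problems, a bridge between $\prod_j h(\xi_j)/g(\xi_j)$ and $\prod_j h(y_j)/g(y_j)$ using the $re^{-T_n}$-Lipschitz constraint and $e^h\ge l_n$ with $l_n D_n\ge n^{\alpha}$ — exactly the paper's $(1+r/n^{\alpha})^n$ factor, which is also why $\alpha=1$ is admissible for the unprimed sets.

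There is, however, a genuine gap in your Part A, and it sits at the quantitatively decisive step. You assert that the log-densities in $S_{n,2}$ form an $L^2$-Sobolev ball ``whose entropy is polynomial in $1/u$'' and then claim that the balance $\int_{\epsilon_n^2}^{\epsilon_n}H_B^{1/2}(u)\,du\lesssim\sqrt n\,\epsilon_n^2$ with the budget $T_ne^{T_n}=O((\log n)^{4q})$ forces $\epsilon_n\asymp(\log n)^q n^{-\alpha/2}$. These two statements are incompatible: if $\log N_{[\,]}(u,S_{n,2},d)\asymp C_n u^{-\gamma}$ for any $\gamma>0$, the balance yields $\epsilon_n\gtrsim (C_n/n)^{1/(2+\gamma)}$, which is polynomially slower than $n^{-1/2}$ and can never produce the near-parametric rate of Part A (recall the theorem permits $\alpha=1$ for $M_{n,2}$). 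The paper's Part A rests precisely on the ingredient your plan lacks: its entropy lemma shows that for $S_{n,2}$ the sup-norm \emph{covering number} is polynomial in $1/\delta$, i.e.\ the entropy is \emph{logarithmic},
\[
\log N\left(\delta,S_{n,2},\|\cdot\|_{\infty}\right)\le A\,\log\left(\frac{(1+2\,e\,e^{T_n})\,T_n}{\delta}\right),
\]
proved by approximating elements of the sieve by $C^{k+1}(\bar\Omega)$ functions (density of $C^{k+1}$ in $W^{k+1,2}$) and invoking Example 2.1 of \citet{van1990estimating}; this is then converted to a Hellinger bracketing bound by covering at resolution $\delta^2/2$ with brackets $\left[\max(f_i-\eta,0),(f_i+\eta)1_{[0,1]}\right]$ as in \citet{ghosal2001entropies}, so that $\log N_{[\,]}(\delta,S_{n,2},d)\lesssim\log\left(T_ne^{T_n}/\delta^2\right)$. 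Only with this logarithmic entropy does the integral inequality admit $\epsilon_n=(\log n)^q n^{-\alpha/2}$ (the condition $q>1/2$ is exactly what absorbs the residual $\sqrt{\log n}$). Your Part B account — polynomial entropy of exponent $1/(k+1)$ with constant $(T_ne^{T_n})^{1/(k+1)}$, budget $(T_ne^{T_n})^{1/(2k+2)}=O(n^{r})$, algebraic rate capped by $(1/2-r)/(1+2/(k+1))$ — is consistent with the paper; but without establishing (or even conjecturing) the near-parametric entropy bound for the $L^2$ sieve, your derivation of Part A is a non sequitur. If by ``entropy polynomial in $1/u$'' you actually meant the covering number, then you have named the right bound, but the substantive work of proving it, which is the heart of the paper's argument, is absent from the proposal.
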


 The existence of the sequences $q_{n,i}$ in  Theorem \ref{convergence_rates} is meant to impose regularity conditions on the true density $f_0$ that allow it to be sufficiently well approximated by elements of the sieve. Thus,we do not require that $f_0$ a smooth function but rather that can be well approximated by our sieves. On the other hand, we can think of the sieves as class of continuous functions that are well approximated by smooth functions. In $S_{n,i}$  the constraint given by $\|\cdot\|_{L^i(\Omega)}^i$ is designed to enforce smoothness. Moreover, taking

 $$
 T_n = \,2\,q\,\log(\log(n)) + 2^{-1}\log(C)
 $$
 with a constant $C>0$ ensures that the elements of $S_{n,2}$ will take values in

 $$((\log n)^{-2q}\,C^{-1/2},(\log n)^{2q}\,C^{1/2})$$
 which approaches to $(0,\infty)$ as $n \rightarrow \infty$. A similar statement can also be made about the sieve $S_{n,1}$.
% we see that the densities in $S_{n,2}$ take values in $(1/n^{2q},n^{2q})$  which is not restrictive.%the second constraint in $S_{n,i} $  given by $L^{\infty}(\Omega)$ is no restrictive since it is equivalent to the set of functions on the sieve  to take values
%in $(\exp(-T_n),\exp(T_n))$.

 Note that when $T_n $ is  constant and  $f_0 \in S_{n,2}$, then, the first part of Theorem \ref{convergence_rates} implies

 \[
    \underset{n \rightarrow \infty}{ \lim }\underset{f_0 \in S_{n,2}}{ \sup}   \text{P}^{*}\left[ \underset{\hat{f}_n \in M_{n,2}^{\prime}}{\text{sup }} d\left(\hat{f}_n,f_0\right) \geq C_2 \{\text{log}(n)\}^q\,n^{-1/2}     \right] \rightarrow 0.
 \]
A related  result for log-spline density estimation  was found in \cite{barron1991approximation} for the Kullback-Leibler divergence when $f_0$  in the log-space space belongs to a Sobolev ball. This then implies that in terms of the Hellinger distance, the estimator $\hat{p}_n$ from  \cite{barron1991approximation}  satisfies
\[
\underset{K \rightarrow \infty}{ \lim }\, \underset{n \rightarrow \infty}{ \lim }\,\underset{f_0 \in B}{ \sup}\,   \text{P}\left[  d\left(\hat{p}_n,f_0\right) \geq K n^{-\frac{2k+2}{2(2k+3)}}   \right] \rightarrow 0.
\]
where
\[
   B = \left\{ f :  \max\left\{ \|\log(f^{(k+1)})\|_{L^2([0,1])}, \|\log(f)\|_{L^\infty([0,1])} \right\} \leq c \right\}
\]
for a positive constant $c$. Thus, Theorem \ref{convergence_rates} provides surprising convergence results for the sieve $S_{n,2}$. Moreover, the second part of Theorem \ref{convergence_rates} addresses the case in which the true density can be approximated by functions in a sobolev ball in $W^{k+1,1}$. This differs from previous work given that $L^1$ spaces are not Hilbert spaces, and hence the analysis from \cite{silverman1982estimation} is not applicable.

%Finally, we emphasize that our convergence rates hold under not requirement that true density $f_0$ is supported on compact set, this is true for both cases $p=q=1$  and $p=q=2$. In the latter case this contrasts with the asymptotic results in  \cite{silverman1982estimation} which directly assume that boundness of $\text{Support}(f_0)$.

%Assumptions (\ref{as:3}) and (\ref{as:4}) require that the true density lies in sieves $S_{n,i}$ and $S'_{n,i}$, respectively . We can consider the less restrictive assumption that there exists $q_{n,i} \in S_{n,i} \,\,(S_{n,i}^{\prime}) $  for which
%\[
 %   \delta_{n,i}  := 2\,\int_{0}^{1}\,   f(\mu)\,\{q_{n,i}(\mu)\}^{-1/2}\, \left[ \{f(\mu)\}^{1/2} - \{q_{n,i}(\mu)\}^{1/2} \right] d\mu
%\]
%satisfies $\text{log}\left(1+ \delta_{n,i}\right) = O(n^{-1})$, $i=1,2$. In any case,  this translates into requiring that the sieves are rich enough to approximate the true density.

\subsection{Discrete estimator}
%<<<<<<< HEAD

In the previous subsection we characterized the variational estimator corresponding to histogram trend filtering. Next we focus on the version of the estimator where we approximate the function on the discrete grid. This is necessary for finding the solution by numerical optimization, and is analogous to the approach taken by \cite{koenker:mizera:2007}, who started with a variational problem and then moved to an approximate solution on a grid. However, they did not provide any statistical guarantees for either of their formulations.

We  now denote  the regularization parameter as $\tau_n$ and define the vectors

\begin{equation}
\label{true_parameter}
   \begin{array}{l}
 \theta^0 := \left\{\text{log}n - \text{log}D_n + \text{log}f_0(\xi_1),\ldots,\text{log}n - \text{log}D_n + \text{log}f_0(\xi_{D_n})\right\}\\
   \end{array}
\end{equation}
and $\hat{\theta}$ as the solution to Problem  (\ref{eqn:histTF_objective}).  Thus up to a known constant of proportionality, $\theta^0$ and $\hat \theta$ are the true and estimated log densities, respectively.

We are now ready to state our next consistency result. Its proof can be found in the appendix and is inspired by previous work on concentration bounds for estimators formulated as convex-optimization problems \citep[e.g.][]{ravikumar2010high,tansey2015vector}.

\begin{theorem}
\label{theorem_1}
Let  $p=q=1$  or  $p = q = 2$ and $s>2$ and assume that the true density $f_0$ is  $L$-Lipschitz for some constant $L > 0$ and  $ \text{Suppored}(f_0)  = [0,1]$. Suppose that we choose $D_n = \Theta\left(n^{1/s}\right)$.  Then there exists a constant $ r > 0$ and a function $\phi$  satisfying

\[
    0  <  \underset{n \rightarrow \infty}{\text{lim }} \,\, n^{-1 + 2/s} \phi(n)  <  \infty,
 \]
 such that if we choose  $\tau_n  \leq  r\,n^{1- 3/(2s)}$, then
\[
    \text{P} \left(  \frac{\|\theta^0 -\hat{\theta}\|_2^2}{D_n} \geq  \frac{1}{n^{1/s}}   \right) \leq \text{exp}\left\{-\phi(n)\right\}.
\]
\end{theorem}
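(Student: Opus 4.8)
The plan is to treat $\hat\theta$ as the solution of a penalized convex $M$-estimation problem and to run the now-standard ``basic inequality plus restricted strong convexity'' argument, tracking how every quantity scales through the choice $D_n = \Theta(n^{1/s})$. Write $v = \hat\theta - \theta^0$, and set $\mu_j = n\int_{I_j} f_0$ and $\lambda_j = e^{\theta^0_j} = (n/D_n) f_0(\xi_j)$. Since $\hat\theta$ minimizes (\ref{eqn:histTF_objective}), the basic inequality $l(\hat\theta) + \tau_n\|\Delta^{(k+1)}\hat\theta\|_q^p \leq l(\theta^0) + \tau_n\|\Delta^{(k+1)}\theta^0\|_q^p$ holds. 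Subtracting the first-order term and using that the Poisson loss has Bregman divergence $D_l(\hat\theta,\theta^0) = \sum_j \lambda_j\,\psi(v_j)$ with $\psi(u) = e^u - 1 - u \ge 0$, I would rearrange to
\[
D_l(\hat\theta,\theta^0) \;\leq\; -\langle \nabla l(\theta^0), v\rangle + \tau_n\big(\|\Delta^{(k+1)}\theta^0\|_q^p - \|\Delta^{(k+1)}\hat\theta\|_q^p\big),
\]
and split the gradient as $\nabla l(\theta^0)_j = \lambda_j - x_j = (\lambda_j - \mu_j) - (x_j - \mu_j)$, separating a deterministic \emph{bias} term from a mean-zero \emph{noise} term.

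First I would establish restricted strong convexity by localization. On the ball $\{\|v\|_2 \le 1\}$ one has $\|v\|_\infty \le 1$, so $\psi(v_j) \ge e^{-1} v_j^2$, whence $D_l(\hat\theta,\theta^0) \ge e^{-1}(\min_j \lambda_j)\|v\|_2^2 \gtrsim (n/D_n)\|v\|_2^2$, using that the $L$-Lipschitz density is bounded below on $[0,1]$ so that $\min_j \lambda_j \asymp n/D_n$. For the bias term, the midpoint-rule error for a Lipschitz integrand gives $|\lambda_j - \mu_j| \le \tfrac12 n L \delta_n^2$, hence $|\langle \lambda - \mu, v\rangle| \le \sqrt{D_n}\,(\tfrac12 nL\delta_n^2)\|v\|_2 \asymp n^{1-3/(2s)}\|v\|_2$. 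For the penalty I distinguish the two cases. When $p=q=1$ the reverse triangle inequality and the bounded $\ell_1 \to \ell_1$ operator norm of $\Delta^{(k+1)}$ give $\|\Delta^{(k+1)}\theta^0\|_1 - \|\Delta^{(k+1)}\hat\theta\|_1 \le \|\Delta^{(k+1)}v\|_1 \le C_k\sqrt{D_n}\,\|v\|_2$, whose coefficient, after dividing by the curvature $n/D_n$, equals a constant proportional to $r$ because of the constraint $\tau_n \le r n^{1-3/(2s)}$. When $p=q=2$ I simply drop the nonpositive term $-\|\Delta^{(k+1)}\hat\theta\|_2^2$ and note that $\tau_n\|\Delta^{(k+1)}\theta^0\|_2^2$ is deterministic and vanishes under the smoothness of $\log f_0$.

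Combining these bounds and cancelling one factor of $\|v\|_2$ yields, on the relevant event, an inequality of the shape $\|v\|_2 \lesssim (D_n/n)\big(\text{bias coeff.} + \|x-\mu\|_2 + \tau_n C_k\sqrt{D_n}\big)$. Plugging in $D_n = \Theta(n^{1/s})$, the bias contribution is $O(n^{-1/(2s)}) \to 0$, the penalty contribution is $O(r)$, and the noise contribution is $(D_n/n)\|x-\mu\|_2$. Choosing $r$ small and defining the good event $G = \{\|x-\mu\|_2 \le \tfrac{c}{2} n^{1-1/s}\}$ forces $\|v\|_2^2/D_n < n^{-1/s}$ on $G$ for large $n$. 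It remains to bound $\mathrm{P}(G^c)$: since $\|x-\mu\|_2 \le \sqrt{D_n}\|x-\mu\|_\infty$, we have $G^c \subseteq \{\max_j |x_j - \mu_j| > \tfrac{c}{2} n^{1-3/(2s)}\}$; as $\mu_j \asymp n^{1-1/s}$, the deviation $t \asymp n^{1-3/(2s)}$ obeys $t/\mu_j \asymp n^{-1/(2s)} \to 0$, so the Poisson Bernstein inequality gives each tail $\le \exp(-c' n^{1-2/s})$, and a union bound over the $D_n = n^{1/s}$ coordinates leaves $\mathrm{P}(G^c) \le \exp(-\phi(n))$ with $\phi(n) \sim c' n^{1-2/s}$, which satisfies $0 < \lim_n n^{-1+2/s}\phi(n) < \infty$.

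The main obstacle is lining up the two scales at once: the noise threshold in $G$ must be small enough ($\asymp n^{1-1/s}$, a factor $n^{1/2-1/s}$ above the typical $\sqrt{n}$ fluctuation) to keep $\|v\|_2$ inside the localization ball where strong convexity holds, yet this same threshold is what produces the sharp exponent $n^{1-2/s}$; verifying that the moderate-deviation (sub-Gaussian) regime of the Poisson tails applies at exactly this scale, and that the $\sqrt{D_n}$ union-bound factor is negligible against $\exp(-n^{1-2/s})$, is the delicate part. A secondary point is justifying the curvature bound $\min_j \lambda_j \asymp n/D_n$, which needs $f_0$ bounded away from $0$ (implicit in the log-density parametrization), and making the localization rigorous through the standard continuity argument that $\hat\theta$ cannot leave the ball without violating convexity of the objective.
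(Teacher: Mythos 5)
Your proposal is correct and follows essentially the same route as the paper's proof: the ``basic inequality plus restricted strong convexity plus localization by convexity'' argument is precisely the paper's device of showing the convex function $G(u) = l(\theta^0+u)-l(\theta^0)+\tau_n(\|\Delta^{(k+1)}(\theta^0+u)\|_q^p-\|\Delta^{(k+1)}\theta^0\|_q^p)$ is positive on the unit sphere, with the same curvature constant $\asymp e^{-1}n/D_n$, the same operator-norm bound $\tau_n\,O(\sqrt{D_n})\,\|v\|_2$ on the penalty, the same elementwise count concentration (the paper centers at $e^{\theta^0_j}$ and absorbs the midpoint error inside a binomial Chernoff bound, where you center at $\mu_j=n\int_{I_j}f_0$ and use Bernstein---an equivalent bookkeeping), and the same exponent $\phi(n)\asymp n^{1-2/s}$ after the union bound. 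The one caveat you flag---that the curvature bound needs $f_0$ bounded away from zero, which Lipschitz continuity and support $[0,1]$ do not imply---is an implicit assumption the paper's own proof also makes (its constant $k_1>0$), so your argument is faithful to the paper on that point as well.
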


Theorem \ref{theorem_1} is quite general, in that it establishes consistency under the assumption that the penalty parameter is relaxed at a sufficient rate.  However, it does not explicitly incorporate the role of the penalty function in (\ref{eqn:histTF_objective}) in establishing the accuracy of the method.  Our final result provides a bound on estimation error that does refer to the penalty explicity. This, unlike Theorem \ref{theorem_1}, can be extended to densities of unbounded support.

\begin{theorem}
\label{theorem_2}
Let $\xi_{j}^{\prime}  $  be the point  in $I_j$  satisfying

\[
   \delta_n\,f_0(\xi_j^{\prime}) = \int_{I_j} f_0(t)dt.
\]
%Then, for any any $s > 1$, $r \in (0,s/2)$ and  $b \in (0,1/2)$ we have that the solution to
Let us also take $b \in (0,1/2)$ and define $\hat{\theta}$  as the solution to the convex optimization problem
\begin{equation}
\label{constrained_problem2}
\begin{array}{ll}
    \underset{ \theta }{ \text{minimize} } &  \sum_{j=1}^{D_n} \left\{
e^{\theta_j} - x_j \theta_j \right\} + \frac{\tau}{2}\| \Delta^{(k+1)}\theta \|_1   \\
      \text{subject to} &  \vert \theta_j  -  log(n\,\delta_n)\vert \leq n^{b},\,\,j=1,\ldots,D_n.\\
   % &      ,
 \end{array}
\end{equation}
Let us assume that $(f_0(\xi_j^{\prime}),\ldots,f_0(\xi_{D_n}^{\prime}))$ belongs to the constraint set of \ref{constrained_problem2}.  Then
\[
 \hat{f}(\xi_j^{\prime}) =  \frac{  \exp\left( \hat{\theta}_j \right) }{n\,\delta_n},\,\,\, j=1,\ldots,D_n,
\]
satisfies

\begin{equation}
\label{bound}
     \sum_{j=1}^{D_n} \delta_n\,  f_0(\xi_j^{\prime}) \log\left(\frac{f_0(\xi_j^{\prime})}{\hat{f}(\xi_j^{\prime})}  \right) = O_{\text{P}}\left( \frac{ \|\left(\Delta^{(k+1)}\right)^{-}\|_{\infty}  }{D_n^{r}}\|\Delta^{(k+1)}\log\left(f_0(\xi^{\prime})\right) \|_1 + \frac{1}{n^{1/2-b}} \right),
\end{equation}
where we choose $D_n = \Theta\left(n^{1/s}\right)$ and  $\tau = \Theta\left( n^{1-r/s}\|\left(\Delta^{(k+1)}\right)^{-}\|_{\infty}\right)$,   where  $s > 1$ and $r \in (0,s/2)$.
\end{theorem}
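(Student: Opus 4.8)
The plan is to run a penalized-M-estimation ``basic inequality'' argument, exploiting the fact that the Poisson negative log-likelihood $l(\theta)=\sum_j\{e^{\theta_j}-x_j\theta_j\}$ is a convex function whose Bregman divergence is exactly a sum of Poisson KL divergences. First I would set up notation matching the target. Let $\theta^0_j=\log\bigl(n\delta_n f_0(\xi_j')\bigr)$, so that $e^{\theta^0_j}=n\delta_n f_0(\xi_j')=n\int_{I_j}f_0$ by the defining property of $\xi_j'$; hence $\E x_j=e^{\theta^0_j}$, i.e.\ the surrogate model (\ref{eqn:surrogate_poisson_model}) is correctly centered at $\theta^0$, and by hypothesis $\theta^0$ is feasible for (\ref{constrained_problem2}). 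Writing $\hat f(\xi_j')=e^{\hat\theta_j}/(n\delta_n)$ and $f_0(\xi_j')=e^{\theta^0_j}/(n\delta_n)$, the left-hand side of (\ref{bound}) equals $\tfrac1n\sum_j e^{\theta^0_j}(\theta^0_j-\hat\theta_j)$, which I will relate to a nonnegative divergence in the last step.

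Next I would invoke optimality of $\hat\theta$ together with feasibility of $\theta^0$ to get the basic inequality $l(\hat\theta)+\tfrac{\tau}{2}\|\Delta^{(k+1)}\hat\theta\|_1\le l(\theta^0)+\tfrac{\tau}{2}\|\Delta^{(k+1)}\theta^0\|_1$. Decomposing the loss difference as $l(\hat\theta)-l(\theta^0)=D(\hat\theta,\theta^0)-\langle\epsilon,\hat\theta-\theta^0\rangle$, where $\epsilon_j=x_j-e^{\theta^0_j}$ is the mean-zero Poisson noise and $D(\hat\theta,\theta^0)=\sum_j\{e^{\hat\theta_j}-e^{\theta^0_j}-e^{\theta^0_j}(\hat\theta_j-\theta^0_j)\}\ge 0$ is the cumulant Bregman divergence, and using that the constant vector lies in the null space of $\Delta^{(k+1)}$ so that $\Delta^{(k+1)}\theta^0=\Delta^{(k+1)}\log f_0(\xi')$, rearrangement yields
\[
D(\hat\theta,\theta^0)+\tfrac{\tau}{2}\|\Delta^{(k+1)}\hat\theta\|_1\ \le\ \langle\epsilon,\hat\theta-\theta^0\rangle+\tfrac{\tau}{2}\|\Delta^{(k+1)}\theta^0\|_1 .
\]

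The core of the argument is controlling the empirical-process term $\langle\epsilon,\hat\theta-\theta^0\rangle$. Writing $u=\hat\theta-\theta^0$, I would split it into its component in the row space of $\Delta^{(k+1)}$ and its component in the $(k+1)$-dimensional null space. For the row-space part I use $\langle\epsilon,(\Delta^{(k+1)})^{-}\Delta^{(k+1)}u\rangle\le\|((\Delta^{(k+1)})^{-})^{\top}\epsilon\|_\infty\,\|\Delta^{(k+1)}u\|_1$ and the triangle inequality $\|\Delta^{(k+1)}u\|_1\le\|\Delta^{(k+1)}\hat\theta\|_1+\|\Delta^{(k+1)}\theta^0\|_1$; choosing $\tau\ge 2\|((\Delta^{(k+1)})^{-})^{\top}\epsilon\|_\infty$ lets the $\tfrac{\tau}{2}\|\Delta^{(k+1)}\hat\theta\|_1$ term cancel the one on the left, leaving $D(\hat\theta,\theta^0)\le \tau\|\Delta^{(k+1)}\theta^0\|_1+(\text{null-space term})$. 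The null-space term is bounded using the box constraint $\|u\|_\infty\le 2n^b$ and the total Poisson variance $\sum_j e^{\theta^0_j}=n$, giving (via a Bernstein bound on a low-dimensional projection) a contribution of order $n^{1/2+b}$. For the threshold on $\tau$, a Poisson Bernstein inequality with a union bound over the $\Theta(D_n)$ coordinates, using $e^{\theta^0_j}\asymp n/D_n=n^{1-1/s}$ and the $\|(\Delta^{(k+1)})^{-}\|_\infty$ bound on the coefficient rows, shows $\|((\Delta^{(k+1)})^{-})^{\top}\epsilon\|_\infty\lesssim \|(\Delta^{(k+1)})^{-}\|_\infty\,n^{(1-1/s)/2}\sqrt{\log n}$ with overwhelming probability; since $r<s/2$ this is dominated by $\tau=\Theta\bigl(n^{1-r/s}\|(\Delta^{(k+1)})^{-}\|_\infty\bigr)$, which is exactly the prescribed rate.

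Finally I would assemble the pieces and convert the Bregman bound into the stated KL-type bound. Dividing through by $n$, the penalty term becomes $\tfrac{\tau}{n}\|\Delta^{(k+1)}\log f_0(\xi')\|_1=\Theta\bigl(\|(\Delta^{(k+1)})^{-}\|_\infty D_n^{-r}\|\Delta^{(k+1)}\log f_0(\xi')\|_1\bigr)$, while the null-space contribution becomes $O_{\mathrm{P}}(n^{-(1/2-b)})$, matching the two summands of (\ref{bound}). The remaining bookkeeping is that the target is the \emph{signed} forward quantity $\tfrac1n\sum_j e^{\theta^0_j}(\theta^0_j-\hat\theta_j)=\tfrac1n D(\hat\theta,\theta^0)-\tfrac1n\sum_j(e^{\hat\theta_j}-e^{\theta^0_j})$, so I must additionally control the total-mass discrepancy $\tfrac1n\sum_j(e^{\hat\theta_j}-e^{\theta^0_j})$ to transfer the bound from the nonnegative divergence $D$ to the quantity in the statement. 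I expect the \textbf{main obstacle} to be precisely the stochastic step: obtaining Poisson (heavy-tailed, heteroskedastic) concentration for $\max_j|((\Delta^{(k+1)})^{-})^{\top}\epsilon)_j|$ with the correct $\|(\Delta^{(k+1)})^{-}\|_\infty$ and $n/D_n$ dependence, and simultaneously controlling the null-space term under the box constraint, while checking that the resulting threshold is consistent with the prescribed $\tau$; relating the signed target to the nonnegative Bregman divergence is a secondary but necessary technical point.
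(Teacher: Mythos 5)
Your proposal is correct and follows essentially the same route as the paper's own proof: the same basic inequality from suboptimality of $\hat\theta$ against the feasible $\theta^0$, the same decomposition of the noise term $\langle x-\exp(\theta^0),\hat\theta-\theta^0\rangle$ into the row space of $\Delta^{(k+1)}$ (handled by H\"older's inequality together with choosing $\tau/2$ above $\|((\Delta^{(k+1)})^{-})^{T}(x-\exp(\theta^0))\|_{\infty}$ so the penalty cancels) and its $(k+1)$-dimensional null space (handled via the box constraint, yielding the $n^{b-1/2}$ term). The only minor differences are that the paper proves the $\ell_\infty$ concentration by a multinomial-to-Poisson reduction (Lemma \ref{auxiliar_lemma}, following Devroye) rather than your direct Bernstein-plus-union bound, which works just as well since $r<s/2$ provides the needed slack even with the cruder variance bound $n\|a\|_{\infty}^{2}$ (your intermediate factor $n^{(1-1/s)/2}$ is slightly optimistic on the variance, but immaterial), and that the total-mass discrepancy you rightly flag at the end is passed over silently by the paper; it can be justified by summing the KKT stationarity conditions against the constant vector, which lies in the null space of $\Delta^{(k+1)}$, so that $\sum_j e^{\hat\theta_j}=\sum_j x_j=n$ whenever the box constraints are inactive.
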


Theorem \ref{theorem_2}  states convergence rates for our Poisson surrogate model estimator. In particular, the bound  in (\ref{bound}) controls the Kullback–-Leibler divergence   between our estimator and a discretized version of the true density. Moreover,  the constraint on the supremum norm in the log-space space ensures that the optimization is over a compact set. Hence,  it is not restrictive given that this bound tends to infinity  as $n$ increases.

Finally, we emphasize that $ \|\left(\Delta^{(k+1)}\right)^{-}\|_{\infty}  = O(D_n)$. This is a consequence of the proof of Corollary 4 in \cite{wang2014trend}. In practice we have found that  $\|\left(\Delta^{(k+1)}\right)^{-}\|_{\infty}\,D_n^{-1} \in (.1474,.1482)$ if if $D_n$  is chosen between $500$  and $10000$.

\subsection{Model selection}
\label{model_section}

We know turn to the discussion of the parameters $D_n$ and $\tau$  when $p=q=1$. For the former of these parameters, we see from the results in the previous section that $D_n = O(n^{1/s})$,  for $s > 2$, seems a reasonable choice. In practice we have observed that the rule $D_n = 10\,n^{1/2.5}$ performs excellently and hence it becomes our default choice. On the other hand, for the choice of $\tau$, we see from Theorem \ref{theorem_2} that
\[
\tau = \Theta\left( n^{1-r/s}\|\left(\Delta^{(k+1)}\right)^{-}\|_{\infty}\right)
\]
is a candidate choice with $r$ satisfying the constraint from Theorem \ref{theorem_2}. We will see in the next section that this performs well in practice as well.  In particular, this choice ensures consistency for the trivial case in which the true density $f_0$ is uniform, the precise definition of the universal penalty required in  \cite{sardy2010density}.

Finally, on the choice of $\tau$, we also consider an add-hoc  rule inspired by the work of \cite{tibshirani2012degrees} on regression problems with generalized lasso penalties. This consists of computing the solution path of the problem \ref{eqn:histTF_objective} and  then considering a surrogate AIC approach by computing
\[
    \text{AIC}_{\tau} = l(\hat{\theta}_{\tau}) + k  + 1 + \left\vert  \left\{ i :  (\Delta^{(k+1)}\hat{\theta}_{\tau} )_i  \neq 0\right\} \right\vert.
\]
The parameter $\tau$ is then chosen to minimize the expression above.

\section{Examples and discussion}

\subsection{Comparison with kernel methods}

We conducted a  simulation study to examine the performance of histogram trend filtering versus some common methods for density estimation.  Our first example is a three-component mixture of normals
$$
f_1(y) = 0\point9 N(y \mid 0, 1) + 0\point1 N(y \mid -2, 0\point1^2) + 0\point1 N(y \mid 3, 0\point5^2)
$$
shown in the top left panel of Figure \ref{fig:sim_examples}.  The second example is a five-component mixture of translated exponentials:
$$
f_2(y) = \sum_{c=1}^7 w_c \ Ex(y - m_c \mid 2) \, ,
$$
where the weight vector is $w = (1/7, 2/7, 1/7, 2/7, 1/7)$ and the translation vector is $m = (-1, 0, 1, 2, 3)$.  Here $Ex(y \mid r)$ means the density of the exponential distribution with rate parameter $r$.  This density is shown in the top right panel of Figure \ref{fig:sim_examples}.

\begin{figure}[bp!]
\begin{center}
\includegraphics[width=5.5in]{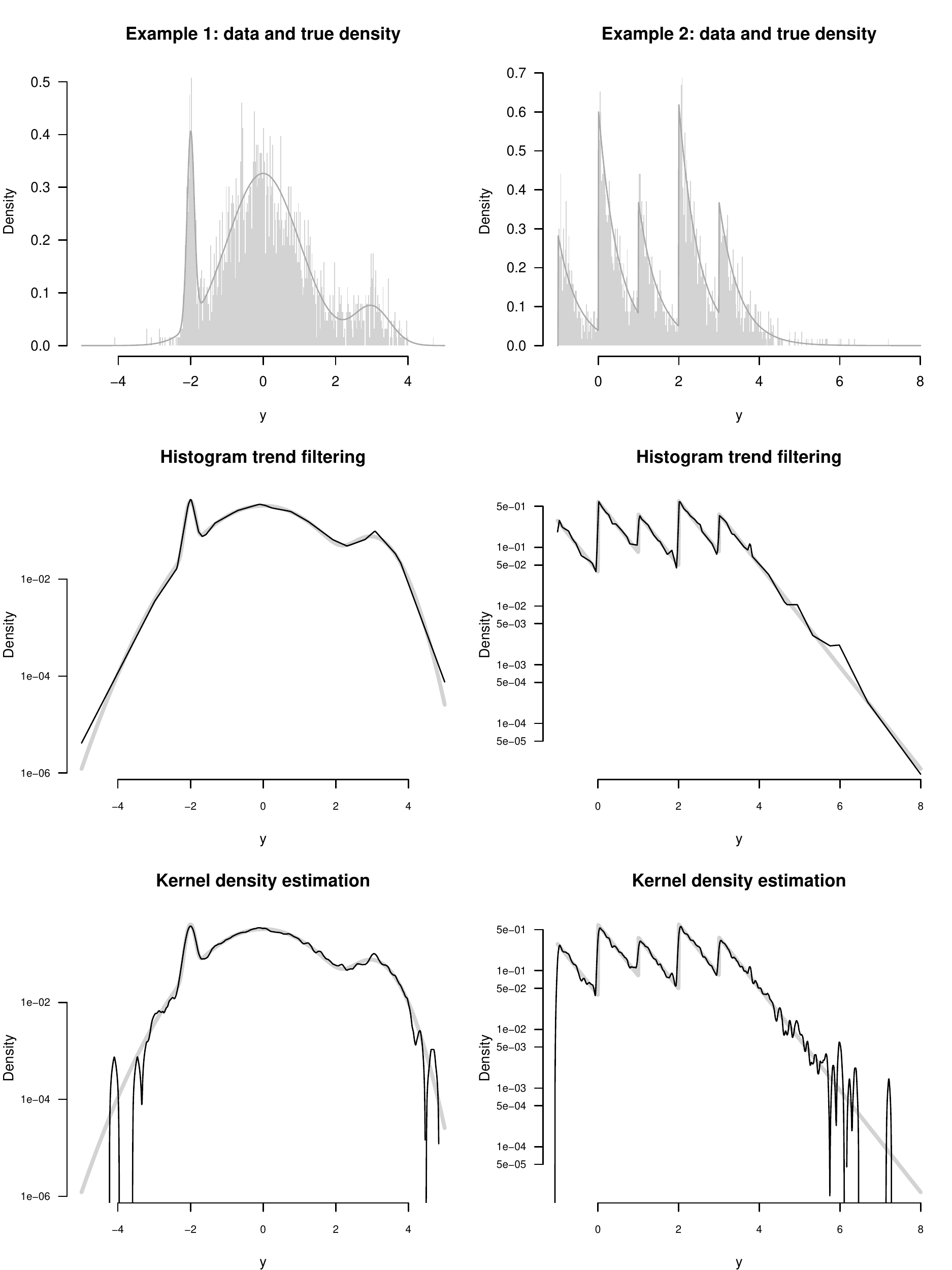}%figures/
\caption{\label{fig:sim_examples} Top two panels: the true densities $f_1$ (left) and $f_2$ (right) in the simulation study, together with samples of $n=2500$ from each density.  Middle two panels: results of histogram trend filtering for the $f_1$ sample (left) and the $f_2$ sample (right).  Bottom two panels: results of kernel density estimation for the $f_1$ sample (left) and the $f_2$ sample (right).  In the bottom four panels the reconstruction results are shown on a log scale.}
\end{center}
\end{figure}

Our simulation study consisted of 25 Monte Carlo replicates for each of six different sample sizes: $n=$ 500, 1000, 2500, 5000, 10000, and 50000.  For each simulated data set, we ran histogram trend filtering with $k=1$ and $k=2$.  We benchmarked the approach against three other methods: kernel density estimation with the bandwidth chosen by five-fold cross-validation, kernel density estimation with the bandwidth chosen by the normal reference rule, and local polynomial density estimation with smoothing parameter chosen by cross-validation.  In the reference-rule version of kernel density estimation, the bandwidth is chosen to be 0.9 times the minimum of the sample standard deviation and the interquartile range divided by $1.06 n^{-1/5}$ \citep{scott:1992}.  We used the version of local polynomial density estimation implemented in the R package \verb|locfit|.

\begin{table}[t]%ht
\centering
\caption{\label{tab:sim1} Mean-squared error $\times$ 100 on example 1 for histogram trend filtering with $k=1$ and $k=2$ versus three other methods: kernel density estimation with bandwidth chosen by cross-validation, kernel density estimation using the normal reference rule, and local polynomial density estimation.}
\medskip
\begin{small}
\begin{tabular}{p{3pc} p{5pc} p{5pc} p{5pc} p{5pc} p{5pc}}
n & HTF ($k=1$) & HTF ($k=2$) & KDE (CV) & KDE (ref) & LP \\
500 & 2.5 & 4.9 & 3.1 & 4.0 & 3.3 \\
  1000 & 1.8 & 2.8 & 2.2 & 3.8 & 2.3 \\
  2500 & 1.3 & 1.6 & 1.7 & 3.3 & 1.6 \\
  5000 & 1.1 & 1.1 & 1.3 & 3.1 & 1.2 \\
  10000 & 0.7 & 0.7 & 0.9 & 2.8 & 0.9 \\
  50000 & 0.3 & 0.3 & 2.5 & 2.2 & 0.4 \\
\end{tabular}
\end{small}
\end{table}

\begin{table}[t]
\centering
\caption{\label{tab:sim2} Mean-squared error $\times$ 100 on example 2 for the same five methods in Table \ref{tab:sim1}.}
\medskip
\begin{small}
\begin{tabular}{p{3pc} p{5pc} p{5pc} p{5pc} p{5pc} p{5pc}}
n & HTF ($k=1$) & HTF ($k=2$) & KDE (CV) & KDE (ref) & LP \\
500 & 5.7 & 6.8 & 5.5 & 8.8 & 6.2 \\
  1000 & 4.0 & 4.6 & 4.5 & 8.5 & 4.9 \\
  2500 & 3.0 & 3.3 & 3.7 & 7.9 & 3.5 \\
  5000 & 2.4 & 2.9 & 3.2 & 7.6 & 2.9 \\
  10000 & 2.0 & 2.9 & 2.8 & 7.0 & 2.6 \\
  50000 & 1.6 & 2.9 & 6.1 & 5.9 & 2.3 \\
\end{tabular}
\end{small}
\end{table}

Tables \ref{tab:sim1} and \ref{tab:sim2} show the average mean-squared error of reconstruction of all methods for both $f_1$ and $f_2$.  Order-$1$ trend filtering has the lowest mean-squared error across all situations.  Figure \ref{fig:sim_examples} provides a detailed look at the two simulated data sets.  The top two panels show $f_1$ and $f_2$ together with a single simulated data set of $n=2500$ from each density.  The middle two panels show the reconstruction results for histogram trend filtering with $k=1$, while the bottom two panels show the reconstruction results for kernel density estimation with the bandwidth chosen by cross validation.  The trend-filtering estimator shows excellent adaptivity: it captures the sharp jumps in each of the true densities, without suffering from pronounced undersmoothing in other regions.

\subsection{Comparison with other penalized methods}

In the two previous examples we have considered comparisons versus estimation methods that scale well with the number of samples. We now conclude with an example comparing our histogram trend filtering versus other penalized methods that face problems with large numbers of samples. These methods are the the penalized likelihood approach from \cite{willett2007multiscale} (W-N), and the   total variation approach from \cite{sardy2010density} (TV) using their universal penalty. We also compare against the taut string method from \cite{davies2004densities}, which is closely related to the estimator from  \cite{sardy2010density}.

\begin{figure}[h!]
\begin{center}
\includegraphics[width=5.3in]{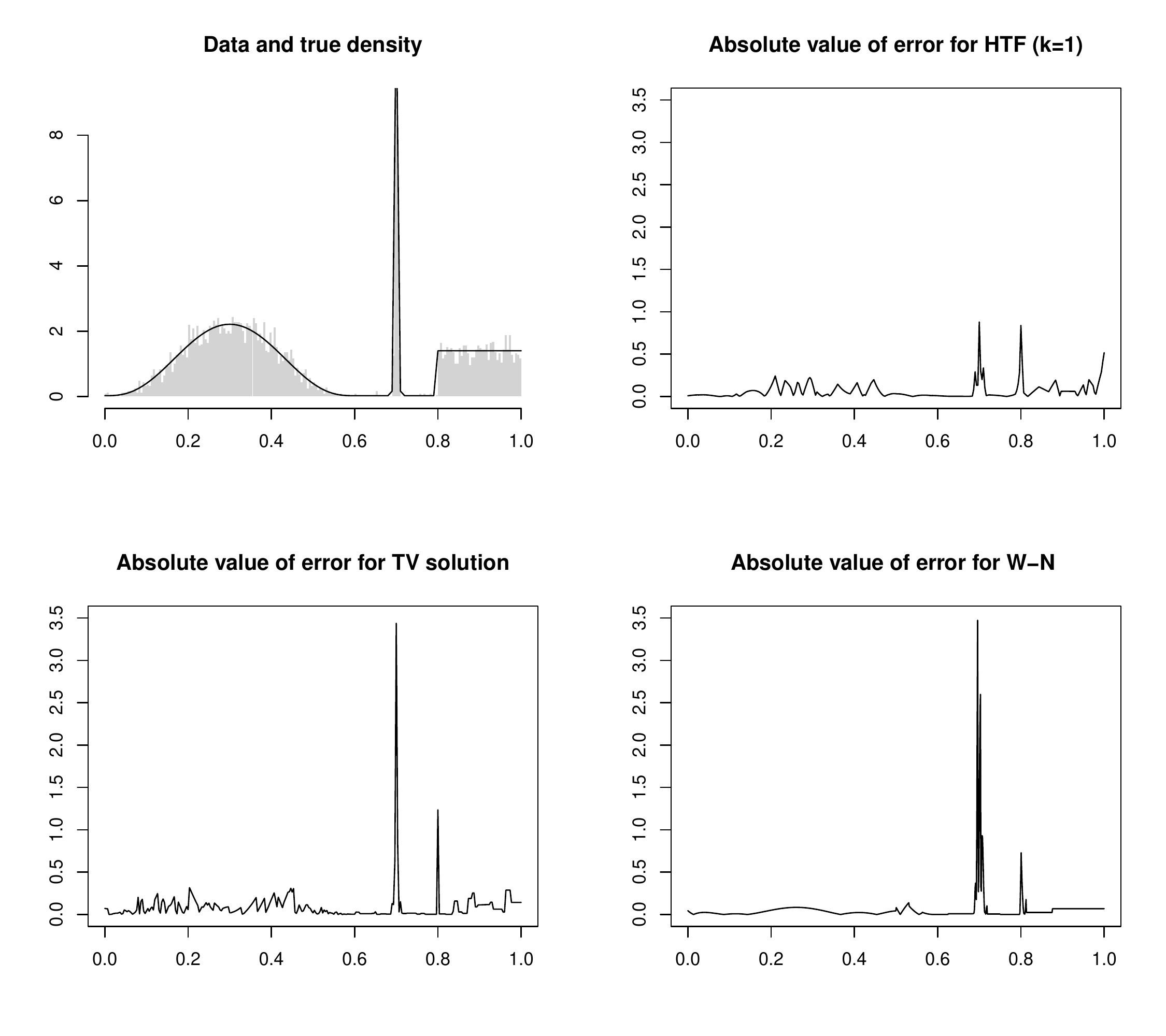} %figures/
\caption{\label{fig:sim_examples3} The left panel at the top shows the data with the true density with sample size $n=5000$. The right panel at the top shows the plot of the vector obtained by taking the absolute value of the difference between HTF(k=1,using the full path) and the true density. The same is done for TV and W-N in the two panels at the bottom.  }
\end{center}
\end{figure}

\begin{table}[t]
\centering
\caption{\label{tab:sim3} Mean-squared error $\times$ 10 on example 3}
\medskip
\begin{small}
\begin{tabular}{p{3pc} p{5.5pc} p{5.5pc} p{5.5pc} p{5.5pc} p{5.5pc} p{5.5pc}}
n               & HTF ($k=1$) & HTF(k=1,no full path)  & TV        & Taut string & W-N     \\    % & WN(20)  \\
500             & 1.4         & 1.5                    & 30        & 22          & 3.7     \\   % & 3.0   \\%3.6     &    & 3.0     \\
  1000          & 1.0         & 1.1                    & 19        & 7.4         & 1.1     \\   % & 1.9   \\      %1.6         & 2.1     \\
  2000          & 0.4         & 0.5                    & 11        & 3.2         & 0.5      \\  % & 0.8   \\ %0.9      & 0.9    \\
  4000          & 0.2         & 0.2                    & 5.6       & 2.1         & 0.3     \\   % & 0.6   \\            %0.9         & 0.9    \\
  5000          & 0.2         & 0.2                    & 4.9       & 2.1         & 0.3      \\  % & 0.3    \\
%  5000          & 1.6         & 2.9                    & 6.1       & 5.9         &             &     \\
\end{tabular}
\end{small}
\end{table}

\begin{table}[t]
\centering
\caption{\label{tab:sim4} Time in seconds on example 3 for the different methods}
\medskip
\begin{small}
\begin{tabular}{p{3pc} p{5.5pc} p{5.5pc} p{5.5pc} p{5.5pc} p{5.5pc} }
n               & HTF ($k=1$) & HTF(k=1,no full path)  & TV        & Taut string & W-N     \\   %   & WN(20)  \\
500             & 1.03        & 0.02                   & 6. 85     & 0.02        & 1.15   \\   %   & 0.12   \\                              % & 1.13        & 0.10    \\
  1000          & 1.15        & 0.03                   & 18.2      & 0.02        & 4.54   \\   %   & 0.20    \\
  2000          & 1.32        & 0.04                   & 45.3      & 0.03        & 22.0   \\   %   & 0.38    \\
  4000          & 1.45        & 0.06                   & 136       & 0.07        & 113    \\   %   & 0.67    \\
  5000          & 1.85        & 0.09                   & 237       & 0.10        & 202    \\   %   & 0.80    \\
%  5000          & 1.6         & 2.9                    & 6.1       & 5.9         &             &     \\
\end{tabular}
\end{small}
\end{table}

On the other hand, in the light of the previous two examples, we now only focus on two different variants of histogram trend filtering with $k=1$. First, we compute the solution
path of problem (\ref{eqn:histTF_objective})   and  then we choose the tuning parameter with the surrogate AIC criterion  described  in Section \ref{model_section}. Secondly, we use the same criteria only on a grid of values. These values are

\[
 \{\lambda^{*}/100, \lambda^{*}/10, \lambda^{*}, \lambda^{*}\,10, \lambda^{*}\,100\},
\]
where  $\lambda^{*} = n\,\|\left(\Delta^{(k+1)}\right)^{-}\|_1\,D^{-1}_n $.  Our motivation here comes from the statement in Theorem \ref{theorem_2}.
%Our motivation is to adavantage of  Theorem \ref{theorem_2}.

%the same selection criteria only on grid  values that behave as $O(\|\left(\Delta^{(k+1)}\right)^{-}\|_1\,D^{-1}\,n)$, specifically, the set $$.

%The explicit density we consider is borrowed from \cite{willett2007multiscale} and consists of a mixture of beta distributions:
We use these two variants of our method by borrowing a density from \cite{willett2007multiscale} that consists of a mixture of beta distributions. Figure \ref{fig:sim_examples3}  illustrates a plot of this distribution. The explicit density is defined as

\[
  f_0(x)  = \frac{3}{5}\left(\beta_{[0,\frac{3}{5}]}(x;4,4) \right) + \frac{1}{10}\left( \beta_{[\frac{2}{5},1]}(x;4000,4000) \right) + \frac{1}{40}\left( \text{Unif}_{[0,1]}(x) \right) + \frac{11}{40}\left( \text{Unif}_{[\frac{4}{5},1]}(x) \right),
\]
where $\beta_{[a,b]}$  refers to a Beta distribution shifted and scaled to have support on the interval $[a, b]$ and integrate to one. We use this density to generate data for different sample sizes.

The results in \ref{tab:sim3} show that our methodology outperforms in accuracy the competitors. This is also visualized in Figure \ref{fig:sim_examples3}, where we can see that   W-N seems to provide better recovery that HTF in areas where the true density  behaves as smooth polynomials. However, HTF seems to be more reliable in areas where the true density changes drastically.

On the other hand, from Table \ref{tab:sim4}, it is clear that HTF is more efficient than W-N and TV which begin to have considerable problems to scale. Even computing the approximate solution path for HTF(k=1) seems hundreds of times faster than solving a single problem for other penalized method. %Thus HTF is extremly effficient making it an attractive method for density estimation.

%This is illustrated in Table \ref{tab:sim3} where it is clear that our methodology outperforms in accuracy the competitors while requiring minimial computational effort. Moreover, from Figure \ref{fig:sim_examples3} we can see that W-N recover generally better the true density than HTF in locations where the true density is smooth, but  W-N presents more difficulty than  HTF  in the regions close to drastic changes of smoothness.

%. The first of this is the penalized likelihood approach from \cite{willett2007multiscale}

\section{Conclusion}
In summary, we have shown that histogram trend filtering can be successfully applied to the problem of density estimation. This estimator enjoys both computational and theoretical attractive  properties. On the computational side, our experiments suggests  that histogram trend filtering scales remarkably well with sample size, and that in practice it is just as computationally efficient as widely used methods based on kernel density estimation (KDE). However, unlike such methods, histogram trend filtering does not suffer from simultaneous over- and under-smoothing. Rather, our estimator can easily adapt to different levels of smoothness of the unknown true density.

Many methods have been proposed in the literature to deal with the problem of local adaptivity, e.g \citep{willett2007multiscale,sardy2010density}. As our paper has shown, these methods face challenges specifically in regions where the smoothness of true density changes rapidly. We have shown that histogram trend filtering can better adjust to such situations, while overcoming the scalability problems also inherent to other penalized methods. Thus histogram trend filtering enjoys both the computational efficiency of KDE methods and the adaptive properties of penalized estimators.  Finally,  our risk bounds provide strong theoretical guarantees of good performance for histogram trend filtering when seen as a variational problem or by its convex optimization formulation. This combination of practicality with strong statistical guarantees makes histogram trend filtering an ideal candidate for use in routine data-analysis applications that call for a quick, efficient, accurate density estimate.

\appendix

\section{Proof of technical results }

%\section{Proofs for main results
\subsection{ Proof of Theorem ~$\text{\ref{equivalent_formultion}}$ }

\begin{proof}

Let us assume that  $\hat{\theta}$ solves (\ref{eqn:histTF_objective}). Then, we define $\hat{g}_i =  \hat{\theta}_i - \log(n\,\delta_n) $  and $c =  \| \Delta^{(k+1)}\hat{\theta} \|_q^p $. Hence from the KKT conditions (\ref{eqn:histTF_objective}) is equivalent to
%$\hat{\theta}$ solves
\[
 \begin{array}{ll}
    \underset{ \theta }{ \text{minimize} } &  \,\sum_{i= 1}^{D_n}\left\{\exp(\theta_i) - x_i\theta_i \right\}   \\
      \text{subject to} &   \| \Delta^{(k+1)}\theta \|_q^p \leq c. \\
   % &      ,
 \end{array}
\]
Now, with the change of variable $g = \theta + \log(n\,\delta_n)$  and dividing by $n$ this is equivalent to

\begin{equation}\label{t1_step1}
 \begin{array}{ll}
    \underset{ g }{ \text{minimize} } & \,\delta_n \,\sum_{i= 1}^{D_n}\exp(g_i) - \frac{1}{n}\sum_{i= 1}^{D_n}\,x_i\,g_i    \\
      \text{subject to} &   \| \Delta^{(k+1)}g\|_q^p \leq c. \\
   % &      ,
 \end{array}
\end{equation}
Next we define the function
\[
G(g) =  \delta_n \,\sum_{i= 1}^{D_n}\exp(g_i) - \frac{1}{n}\sum_{i= 1}^{D_n}\,x_i\,g_i.
 \]
 and for an arbitrary $g \in \mathbb{R}^{D_n}$ we define $g^{\prime} \in \mathbb{R}^{D_n}$ as

 \[
 g^{\prime}_i  = g_i - \log\left(\delta_n\,\sum_{j=1}^{D_n}\exp(g_j)\right).
 \]
 Then
 \[
    G(g^{\prime})   = G(g)  +  1 - \delta_n\,\sum_{j=1}^{D_n}\exp(g_j) +  \log\left(\delta_n\,\sum_{j=1}^{D_n}\exp(g_j)\right) \leq G(g)
 \]
since $t - \log(t) \geq 1$ for all $t > 0$. Moreover,
\[
    \| \Delta^{(k+1)}g\|_q^p =    \| \Delta^{(k+1)}g^{\prime}\|_q^p .
\]
Therefore, problem (\ref{t1_step1}) is equivalent to

\[
 \begin{array}{ll}
    \underset{ g }{ \text{minimize} } &  - \frac{1}{n}\sum_{i= 1}^{D_n}\,x_i\,g_i    \\
      \text{subject to} &  \,\delta_n \,\sum_{i= 1}^{D_n}\exp(g_i) = 1 \\
      &   \| \Delta^{(k+1)}g\|_q^p \leq c \\
   % &
   \end{array}
\]
and the claim follows.

\end{proof}

%\begin{equation}
%\label{constrained_problem}
%\begin{array}{ll}
%    \underset{ g }{ \text{minimize} } &  -\frac{1}{n}\,\sum_{i= 1}^{D_n} x_i\,g_i   \\
%      \text{subject to} & \sum_{i=1}^{D_n} \delta_n\,e^{g_i}   = 1,\,\,\,\,   \| \Delta^{(k+1)}g \|_q^p \leq c. \\
   % &      ,%
% \end{array}
%\end{equation}

%\begin{equation}
%\label{eqn:histTF_objective}
%\begin{aligned}
%& \underset{\theta \in \mathcal{R}^D}{\text{minimize}}
%& &
%l(\theta) + \tau \Vert \Delta^{(k+1)} \theta \Vert^p_q \, ,
%\end{aligned}
%\end{equation}

\subsection{ Proof of Theorem ~$\text{\ref{convergence_rates}}$ }

\begin{proof}
We now focus on the proof of Theorem~$\text{\ref{convergence_rates}}$. Since this requires several steps, we start by introducing some notation. Let $S$  be a set of integrable functions with support  $[a,b]$ and $d_S$  a metric on $L_1(\mathbb{R})$. For a given $\delta >0$, we define the entropy of $S$, denoted by $N\left(\delta,S, d_S \right) $, to be the minimum $N$ for which there exist integrable functions $f_1,\ldots\,f_N$  satisfying

\[
   \underset{f_i} {\text{min }}\,\,\, d_S(f_i,g) \leq  \delta, \,\,\,\forall g \in S.
 \]
The $\delta-$bracketing number  $N_{[\,]}\left(\delta,S,d_S\right) $ is defined as the minimum number of brackets of size $\delta$ required to cover $S$, where a bracket of size $\delta$ is a set of the form  $\left[l,u\right] := \{h: \,\ l(x)\leq h(x) \leq u(x)\,\, \forall x  \}$,  where $l$  and $u$  are non-negative integrable functions  and $d_S(l,u) < \delta$. %Throughout,  we denote

%\[
%    \|h\|_{\infty} = \underset{x \in [a,b]}{\text{sup}} \vert h(x) \vert.
%\]
From now on we denote by $d$ the distance

\[
d(g,h) =  \left[ \int_{\mathbb{R}} \left\{g(\mu)^{1/2}-h(\mu)^{1/2}\right\}^2 d\mu \right]^{1/2}.
\]
Moreover,  for an open set $\Omega \subset \mathbb{R}$  we denote by  $C^m(\bar{\Omega})$ the set of of $m$-times differentiable functions for which the derivatives of orders less than or equal to $m$ are uniformly continuous.

Next recall that the Sobolev space $W^{m,p}(\Omega)$ is endowed with the norm

\[
   \| u\|_{W^{m,p}(\Omega)} :=  \left( \sum_{j=0}^{m} \| u^{(j)}\|_{L^p(\Omega)}  \right)^{1/p}
\]
where $u^{(j)}$ is the $j-th$ weak deriuvative of $u$.

In  what follows we focus on the proof for the minimization over $S_{n,2}$. For the case $S_{n,1}$ we then briefly  describe the corresponding modification. The proof for $S_{n,i}^{\prime}$, $i = 1,2$  is analogous.

%\begin{lemma}

%\label{lemma_1}

\begin{lemma}
\label{lemma_1}
There exists a constant $\delta_0 > 0$     such that if  $0 < \delta < \delta_0$, then for all $n$ we have
%For any $0 <  \delta < 1/2$,

\[
\begin{array}{lll}
 \log\left(N(\delta,S_{n,2},\|\cdot\|_{\infty}) \right)  &\leq &  A\log\left(  \frac{(1+2\,e\,e^{T_n})\,T_n\,}{\delta}
   \right)\\
    \log\left(N(\delta,S_{n,1},\|\cdot\|_{\infty}) \right)  &\leq &  B\left(\frac{ T_n\,(2\,e\,e^{T_n}+1) }{\delta}  \right)^{1/(k+1)}  + (k+1)\log\left(\frac{\,T_n\,(1+2\,e\,e^{T_n})   }{\delta} \right)
    \end{array}
\]
where $A$  and $B$  are positive constants.

\end{lemma}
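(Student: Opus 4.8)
The plan is to prove both estimates by a common two-stage scheme: first transfer the problem from the density sieve $S_{n,i}$ to its logarithmic version $\log S_{n,i}$, and then bound the sup-norm covering number of the latter, which is a ball of smooth functions on $\Omega = (0,1)$. For the transfer, every $h \in \log S_{n,i}$ obeys $\|h\|_{L^\infty(\Omega)} \le T_n$, so the values of $e^h$ lie in $[e^{-T_n}, e^{T_n}]$ and $x \mapsto e^x$ is $e^{T_n}$-Lipschitz on that range. Hence an $\eta$-net of $\log S_{n,i}$ in $\|\cdot\|_\infty$ maps under $h \mapsto e^h$ to an $(e^{T_n}\eta)$-net of $S_{n,i}$, giving
\[
N(\delta, S_{n,i}, \|\cdot\|_\infty) \le N\!\left(\delta\,e^{-T_n},\, \log S_{n,i},\, \|\cdot\|_\infty\right).
\]
This is exactly where the factors $e^{T_n}$ inside both right-hand sides originate: it remains to cover $\log S_{n,i}$ at the reduced scale $\eta = \delta\,e^{-T_n}$, and the numerators $T_n(1+2ee^{T_n})$ are what one obtains after substituting this $\eta$ back.

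To cover $\log S_{n,i} \subset \{h : \|h\|_\infty \le T_n,\ \|h^{(k+1)}\|_{L^i(\Omega)}^i \le T_n\}$ I would split each $h$ as $h = p + g$, where $p$ is the degree-$k$ Taylor polynomial of $h$ at a fixed base point and $g = h - p$ has vanishing derivatives of orders $0,\dots,k$ there while $g^{(k+1)} = h^{(k+1)}$. The polynomial part $p$ ranges over a $(k+1)$-dimensional space, and its coefficients lie in a bounded box: $\|h\|_\infty \le T_n$ pins the value, and Landau--Kolmogorov interpolation inequalities bound the intermediate derivatives in terms of $T_n$ and $\|h^{(k+1)}\|_{L^i}$. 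Discretizing these $k+1$ coefficients on a grid of spacing $\sim \eta$ therefore costs $\log$-entropy of order $(k+1)\log(T_n(1+2ee^{T_n})/\delta)$, which is precisely the additive logarithmic term in the $S_{n,1}$ bound and, after the reduction above, the entire $S_{n,2}$ bound.

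It remains to cover the pure-smoothness remainder $g$, and this is where the two cases differ through the local polynomial-approximation rate. Partitioning $\Omega$ into subintervals of length $\ell$ and using a Bramble--Hilbert estimate, the local error is $\lesssim \ell^{k}\|h^{(k+1)}\|_{L^1(I_j)}$ when $i=1$ and the sharper $\lesssim \ell^{k+1/2}\|h^{(k+1)}\|_{L^2(I_j)}$ when $i=2$, by Cauchy--Schwarz. For $i=1$ the budget $\sum_j \|h^{(k+1)}\|_{L^1(I_j)} \le T_n$ together with adaptive knot placement (to handle concentration of the $L^1$ mass) yields the classical Sobolev-ball entropy $B(T_n(1+2ee^{T_n})/\delta)^{1/(k+1)}$, the polynomial term in the stated bound. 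For $i=2$ the energy budget $\sum_j \|h^{(k+1)}\|_{L^2(I_j)}^2 \le T_n$ spreads the mass, and it is the claim of the lemma that this forces the remainder entropy to collapse to a purely logarithmic quantity of order $A\log(T_n(1+2ee^{T_n})/\delta)$.

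The main obstacle is precisely this last reduction: establishing a genuinely logarithmic---rather than merely polynomial---bound for the $i=2$ remainder. A straightforward piecewise-polynomial count, even with the improved $\ell^{k+1/2}$ rate, still leaves a positive power of $1/\delta$, so the argument must exploit the Hilbert-space structure of the $L^2$ constraint directly. The natural route is to expand $g$ in an orthonormal basis diagonalizing the smoothness functional, use $\|h^{(k+1)}\|_{L^2}^2 \le T_n$ to bound a weighted sum of squared coefficients, and argue that only logarithmically many coefficients must be resolved to accuracy $\eta$ while the tail is automatically negligible in $\|\cdot\|_\infty$. Carrying this through, and pinning down the explicit constants $A$ and $B$ and the exact dependence on $T_n e^{T_n}$, is the delicate part; that dependence is what later drives Dudley's entropy integral and produces the near-parametric rate of Theorem~\ref{convergence_rates}, Part~A.
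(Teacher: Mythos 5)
Your first stage coincides with the paper's: the paper also passes from a sup-norm net for the log-densities to one for the densities via the pointwise bound $\vert f(x)-e^{g_i(x)}\vert \le \max\left(h(x),e^{g_i(x)}\right)\vert \log h(x) - g_i(x)\vert + \delta$, which is exactly where the factor $(1+2\,e\,e^{T_n})$ arises. The only point you gloss over there is that elements of $S_{n,2}$ are sup-norm limits, so one must first pick $h$ with $\log h \in W^{k+1,2}(\Omega)$ within $\delta$ of $f$ and then use the density of $C^{k+1}(\bar{\Omega})$ in $W^{k+1,2}(\Omega)$ to land in a smooth class $S(\epsilon)$; that is a minor omission.

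The genuine gap is in your second stage, and you identify it yourself: you never establish the logarithmic entropy bound for the $i=2$ class, which is the entire content of the first inequality of the lemma and what later drives the near-parametric rate in Part A of Theorem~\ref{convergence_rates}. Worse, the route you propose for closing it cannot work: for a class of the form $\left\{ g : \|g\|_{L^{\infty}(\Omega)} \le T_n,\ \|g^{(k+1)}\|_{L^{2}(\Omega)}^2 \le T_n \right\}$, the classical results of Kolmogorov--Tikhomirov and Birman--Solomjak give sup-norm metric entropy of exact order $\delta^{-1/(k+1)}$, so resolving the class to accuracy $\delta$ requires polynomially many coefficients in any orthonormal expansion, not logarithmically many; no exploitation of the Hilbert-space structure collapses this to $\log(1/\delta)$. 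The paper does not construct this bound either: it obtains it in one line by invoking Example 2.1 of \cite{van1990estimating} for the $L^2$ case, and Theorem 2 of \cite{mammen1991nonparametric} for the $L^1$ case. Relative to that proof, your Taylor-polynomial-plus-piecewise-polynomial construction is a reasonable self-contained substitute for the $S_{n,1}$ bound (polynomial term plus $(k+1)\log$ term), but your $S_{n,2}$ argument is missing precisely the step the paper outsources to the cited external result, and the basis-truncation idea you suggest for filling it would fail.
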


%\paragraph{Step 1} First, we establish that for any $0 <  \delta < 1/2$,

%\begin{equation}
%\label{step1}
%\begin{array}{lll}
 %\text{log}\left(N \left(\delta,S_n^1,\|\cdot\|_{\infty} \right)\right) & \leq &  C_1\,\text{log}\left(\frac{a_n^{2k-1}}{\delta}\right), \\
%  \text{log}\left\{N \left(\delta,S_{n,2},\|\cdot\|_{\infty} \right)\right\} & \leq & C_2\,\text{log}\left(\frac{1}{\delta}\right),
    %C_2\,\text{log}\left(\frac{T_n\,a_n^{2k-1}}{\delta}\right),
%\end{array}
%\end{equation}
%for some positive constant  $C_2$. The indexing of $n$ in the statement is irrelevant. It is only done with the purpose of facilitating the reading of the proof. To show this, we first define the set
\begin{proof}

For $\epsilon >0 $ we first define the set
%The indexing of $n$ in the statement of the Lemma is irrelevant it is only done with the purpose of facilitating the reading of the proof. Now, we define the  set
\[
\begin{array}{lll}
S(\epsilon)  & =  & \big\{ g \in C^{k+1}(\bar{\Omega}) \,:\,  \,\,\, \int_{[0,1]}\vert \left(g\right)^{(k+1)}(\mu) \vert^2 d\mu \leq  T_n   + \epsilon\, \\
 & &\,\,\,\,\,\,\,\,\,\,\,\,\,\,\,\,\,\,\,\,\,\,\,\,\,\, \,\,\quad\quad\quad \|g\|_{L^{\infty}([0,1])  }\leq  T_n  \,+\, \epsilon \,\big\},
\end{array}
\]
%and note that there is a $\|\cdot\|_{L^{\infty}([-a_n,a_n])}$-preserving  bijection between $S$  and the set of functions

%$$\left\{ f \in C^{k+1}([0,1]) \,\,\,:\,\,\, \int_{0}^1 |f(\mu)^{(k+1)}|^{2} d\mu  \leq \left(2a_n\right)^{2k+1}T_n + \epsilon, \,\, \|f\|_{L^{\infty}([-a_n,a_n]}\leq  \left(2\,a_n\right)^{2\,k+1}T_n  + \epsilon \right\}.
%$$
Then from Example 2.1 in \cite{van1990estimating}, because $T_n$ is bounded by below, there exists $\delta_0$ such that $\delta \in (0,\delta_0)$  implies
 \[
    \text{log}\left(N\left(\delta,S(\epsilon),\|\cdot\|_{\infty}\right)\right) \leq A\,\text{log}\left(\frac{T_n + \epsilon}{\delta}\right).
 \]

Next let $\epsilon > 0$ fixed.  Then  if   $f \in S_{n,2}$, by definition, there exists $h$ such that $\log(h) \in W^{k+1,2}(\Omega)$ and

\[
   \|f - h \|_{L^{\infty}(\Omega)} < \delta
\]
and
\[
\max\{\|\log(h)\|_{L^{\infty}(\Omega)},\|\log(h^{(k+1)})\|_{L^{2}(\Omega)}^2\} \leq T_n.
\]
Since $C^{k+1}(\bar{\Omega})$ is dense in $W^{k+1,2}(\Omega)$,  e.g \citep{adams2003sobolev,oden2012introduction},  by the Sobolev embedding theorem there exists  $g \in S(\epsilon)$ such that

\[
   \|g - \log(h) \|_{L^{\infty}(\Omega)} < \delta.
\]

%Then from Theorem 3.18 in \cite{adams2003sobolev} and Theorem 3.8 from \cite{oden2012introduction}, if   $f \in S_{n,2}$, then there exists $\{g_l\} \in S(\epsilon)$ such that

%\[
%   \underset{l \rightarrow \infty}{\lim} \|g_l - \log(f)\|_{W^{k+1,2}(\Omega_{a_n})}      = 0.
%\]
%Hence, again by Theorem 3.18 in \cite{adams2003sobolev},  we can choose $g \in S(\epsilon) $  such that

%\[
%   \|g -  \log(f)\|_{L^{\infty}(\Omega_{a_n})} < \delta.
%\]
%Therefore, if $f
% Moreover from the discussion on equivalent norms in Sobolev spaces from Chapter 8 in \cite{brezis2010functional},  we obtain that
%the Sobolev embedding theorem, for large enough $l$  we have that
%\[
%       \|g_l - \log(f)1_{(-a_n,a_n)}\|_{ L^{\infty}(\Omega_{a_n}) }  \leq  C\,\|g_l - \log(f)1_{(-a_n,a_n)}\|_{W^{m,p}(\Omega_{a_n})}
%\]

%the Sobolev embbeding theroem   and the discussion on equivalent norms in Sobolev spaces from Chapter 8 in \cite{brezis2010functional}  we obtain that for  $f \in S_{n,2}$  there exists $g  \in S$

%adams2003sobolev
%brezis2010functional
 Let us now  set $N = N\left(\delta,S(\epsilon),\|\cdot\|_{\infty}\right)$ and let $g_1,\ldots,g_N \in S(\epsilon)$ be functions such that for every $g \in S(\epsilon)$, there exists $i \in \{1,\ldots,N\}$ satisfying

 $$\|g -g_i\|_{L^\infty([0,1])} \leq \delta. $$
Then for $f \in S_{n,2}$  choosing $h$ as before and  $g_i \in S(\epsilon)$ such that $\|g_i - \log(h) \|_{L^{\infty}(\Omega)} < 2\,\delta$ we obtain that for all $x \in (0,1)$

\[
  \begin{array}{lll}
    \vert f(x) - e^{g_i(x)}\vert &  \leq    &  \text{max}\left(h(x),e^{g_i(x)}\right)\vert \text{log}\left(h(x)\right) - g_i(x)\vert  + \delta\\
    & \leq  &  \, \text{max}\left(h(x),e^{\text{log}\left(h(x)\right) + 2\,\delta}\right)\,2\,\delta + \delta  \\
    %& \leq  &  1_{[-a_n,a_n]}(x)\, e\,C_n\,\delta .\\
  & \leq  &  (1+\,2\, e\,e^{T_n})\,\delta.
  \end{array}
\]
Therefore, for all $\epsilon > 0$

\[
  \begin{array}{lll}
 \text{log}\left(N\left(\delta,S_{n,2},\|\cdot\|_{\infty}\right)\right)   &  \leq &   \text{log}\left(N\left(\frac{\delta}{(1+2\,e\,e^{T_n})},S(\epsilon),\|\cdot\|_{\infty}\right)\right)\\
  &\leq  &  A\,\text{log}\left( (1+2\,e\,e^{T_n})\frac{ T_n + \epsilon}{\delta}\right).
   \end{array}
\]
Letting epsilon go to zero we arrive to

\[
 \text{log}\left(N\left(\delta,S_{n,2},\|\cdot\|_{\infty}\right)\right)     \leq  A\,\text{log}\left( \frac{T_n\,(1+2\,e\,e^{T_n}) }{\delta}\right)
\]
%  Now,  if $f \in S_{n,2}$, then there exists $g_i$ with $\|\text{log}(f)1_{[-a_n,a_n]} -g_i\|_{\infty} \leq \delta $. Therefore for any $x$,
%\[
%\begin{array}{lll}
%\vert f(x) - e^{g_i(x)}1_{[-a_n,a_n]}(x)\vert &  \leq &  1_{[-a_n,a_n]}(x)\, \text{max}\left(f(x),e^{g_i(x)}\right)\vert \text{log}\left(f(x)\right) - g_i(x) \vert
%\\
%& \leq  & \, \text{max}\left(f(x),e^{\text{log}\left(f(x)\right) + \delta}\right)\delta  \\
%    & \leq  &  1_{[-a_n,a_n]}(x)\, e\,C_n\,\delta .\\
%\end{array}
%\]
Hence the result follows for $S_{n,2}$. The proof for the sieve  $S_{n,1}$ follows the same lines with entropy bound for the corresponding $S$  coming from the proof of Theorem 2 in \cite{mammen1991nonparametric}.

\end{proof}
%Therefore, if $\sum_{j=1}^{D_n} h(\xi_j) = D_n$, then     $\tilde{h }  := \sum_{j=1}^{D_n} h(\xi_j)1_{I_j} \in S_{n,2}$ and
%\[
%\left\|\tilde{h} - \sum_{j=1}^{D_n} e^{g_i(\xi_j)}1_{I_j} \right\|_{\infty} \leq e\,C\,\delta.
%\]
%Hence, (\ref{step1}) follows.

\begin{corollary}
With the notation from the previous lemma,  there exists $\delta_{0}^{\prime}$  such that   $0 < \delta < \delta_{0}^{\prime}$ implies
\[
\begin{array}{lll}
 \log\left\{N_{[\,]}\left(\delta,S_{n,2},d \right)\right\} & \leq & A\,\log\left( \frac{2\, T_n\,(e^{T_n}\,2\,e+1) }{\delta^{2}} \right),\\
   \log\left\{N_{[\,]}\left(\delta,S_{n,1},d \right)\right\} & \leq &  (k+1)\log\left(  \frac{2\,T_n\,(e^{T_n}\,2\,e+1) }{\delta^{2}} \right)\\
   & & +  B\left( \frac{2\, T_n\,(e^{T_n}\,2\,e+1) }{\delta^{2}}\right)^{1/(k+1)}.
\end{array}
\]
%an
\end{corollary}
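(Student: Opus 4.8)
The plan is to convert the sup-norm covering bounds of Lemma~\ref{lemma_1} into Hellinger bracketing bounds by the standard device of wrapping a bracket around each covering center and then rescaling. Concretely, fix $i \in \{1,2\}$ and a resolution $\delta_1 > 0$. By Lemma~\ref{lemma_1} there exist functions $f_1,\dots,f_N \in S_{n,i}$, with $N = N(\delta_1, S_{n,i}, \|\cdot\|_\infty)$, such that every $f \in S_{n,i}$ satisfies $\|f - f_j\|_{L^\infty(\Omega)} \leq \delta_1$ for some $j$. For each $j$ I would define the bracket endpoints (both understood to vanish outside $\bar\Omega$)
\[
l_j = \max(f_j - \delta_1,\,0), \qquad u_j = f_j + \delta_1,
\]
which are nonnegative and integrable. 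Since every $f \in S_{n,i}$ is nonnegative and satisfies $|f - f_j| \leq \delta_1$ on $\Omega$, we get $l_j \leq f \leq u_j$ pointwise, so the $N$ brackets $[l_j, u_j]$ cover $S_{n,i}$.

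It remains to control the $d$-size of each bracket, and this is where the one genuinely useful inequality enters. For nonnegative reals $a,b$,
\[
\left(a^{1/2} - b^{1/2}\right)^2 = a + b - 2\sqrt{ab} \leq a + b - 2\min(a,b) = |a - b|,
\]
because $\sqrt{ab} \geq \min(a,b)$. Applying this pointwise with $a = u_j(x)$ and $b = l_j(x)$, and noting that $u_j - l_j \leq 2\delta_1$ everywhere (the truncation in $l_j$ only shrinks the gap), I obtain
\[
d(l_j, u_j)^2 = \int_{\Omega} \left(u_j^{1/2} - l_j^{1/2}\right)^2 d\mu \leq \int_{\Omega} (u_j - l_j)\, d\mu \leq 2\delta_1,
\]
where I used that $\Omega = (0,1)$ has Lebesgue measure one. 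Hence each bracket has $d$-size at most $\sqrt{2\delta_1}$.

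Finally I would match the two scales: taking $\delta_1 = \delta^2/2$ forces every bracket to have $d$-size at most $\delta$, so $N_{[\,]}(\delta, S_{n,i}, d) \leq N(\delta^2/2, S_{n,i}, \|\cdot\|_\infty)$. Substituting $\delta_1 = \delta^2/2$ into the two sup-norm bounds of Lemma~\ref{lemma_1}, the replacement of $1/\delta_1$ by $2/\delta^2$ produces exactly the factor $2$ in the numerators and the exponent $\delta^{-2}$ in the claimed inequalities for $S_{n,2}$ and $S_{n,1}$; one simply sets $\delta_0' = \sqrt{2\,\delta_0}$ so that $\delta < \delta_0'$ guarantees $\delta^2/2 < \delta_0$, with $\delta_0$ the threshold from Lemma~\ref{lemma_1}. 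I expect the only step requiring real care to be the bracketing construction itself, namely verifying that the truncated endpoints genuinely sandwich every $f \in S_{n,i}$ while keeping the $d$-diameter bounded by $|u_j - l_j|$; the pointwise estimate $(a^{1/2}-b^{1/2})^2 \le |a-b|$ is precisely what avoids the otherwise problematic blow-up of $(u_j - l_j)/(u_j^{1/2} + l_j^{1/2})$ near points where $l_j$ vanishes, which is the reason a sup-norm cover at scale $\delta_1$ translates to a $d$-bracketing only at the coarser scale $\sqrt{2\delta_1}$.
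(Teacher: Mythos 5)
Your proposal is correct and follows essentially the same route as the paper: both construct brackets $l_j = \max(f_j - \eta, 0)$, $u_j = f_j + \eta$ around a sup-norm cover at resolution $\eta = \delta^2/2$, bound the $L_1$ gap by $2\eta$, and pass to Hellinger size via the pointwise inequality $(a^{1/2}-b^{1/2})^2 \leq |a-b|$ (which the paper imports from Ghosal's Lemma 3.1 and you prove directly). The only cosmetic difference is that the paper routes through the intermediate $L_1$-bracketing number $N_{[\,]}(\delta^2, S_{n,j}, \|\cdot\|_1)$ while you bound the $d$-diameter of each bracket in one step; the substitution into Lemma \ref{lemma_1} and the threshold $\delta_0'$ match the paper's conclusion.
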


\begin{proof}
Let $j \in \{1,2\}$. We proceed as in the proof of Lemma 3.1 from \cite{ghosal2001entropies}. First,  given $\delta \in (0,\delta_0)$, we  define $\eta = (2)^{-1}\delta^2 $. Next, let $f_1,\ldots,f_N$ be  non-negative integrable functions  with support in $[0,1] $  such that  for all $h \in S_{n,j}$, there exists $i \in \{1,\ldots,N\}$ such that $\|f_i - h\|_{L^\infty(\Omega)} < \eta$. We then construct the brackets $[l_i,u_i] $ by defining

\[
    l_i = \text{max}\left(f_i - \eta, 0\right),\,\,\, u_i = \left(f_i + \eta \right) 1_{[0,1]}.
\]
Then  $ S_{n,j}  \subset  \cup_{i=1}^{N} [l_i,u_i]  $. Since  $0 \leq u_i - l_i  \leq 2\eta$, we obtain

\[
    \int_{-\infty}^{\infty}\left\vert u_i(\mu) - l_i(\mu)\right\vert d\mu  =  \int_{0}^{1}\left\vert u_i(\mu) - l_i(\mu)\right\vert d\mu \leq   2\,\eta.
\]
 Therefore,

\[
    N_{[\,]}\left(2\,\eta,S_{n,j}, \|\cdot\|_1\right)  \leq N.
\]
The results then follows from the previous lemma  by choosing $\eta = (2\,)^{-1}\,\delta^2$, implying that

\[
\begin{array}{lll}
  \log\left\{N_{[\,]}\left(\delta,S_{n,j},d \right)\right\}  & \leq &   \log\left\{N_{[\,]}\left(\delta^{2},S_{n,j},\|\cdot\|_1 \right)\right\} \\
   &\leq &   \log\left\{N\left(\frac{\delta^{2}}{2\,},S_{n,j},\|\cdot\|_{\infty} \right)\right\}.
  %  \text{log}\{ N_{[\,]}\left(\delta,S_{n,2}, H\right)\}  \leq  \text{log}\{ N_{[\,]}\left(\delta^2,S_{n,2}, \|\cdot\|_1 \right)\} \leq C_2\text{log}\left(\frac{2}{\delta^2}\right).
\end{array}
\]
\end{proof}

\paragraph{Existence}

We now show that the sets $M_{n,i}$ and $M_{n,i}^{\prime}$ are not empty. To this end, note that in $\left(\bar{C}(\Omega ), \|\cdot\|_{L^{\infty}(\Omega )} \right)$ we have that

$$\{h\,:\, e^h \in S_{n,i} \}  \subset \text{cl}_{\Omega }\left(\mathcal{P} \cap \bar{C}(\Omega) \right) \cap   \text{cl}_{\Omega}\left(h \,:\, h \in \ W^{k+1,i}(\Omega), \,\,h\,\,\text{ is continuous}   \right)  $$

%\[
%\begin{array}{lll}
%   S_{n,i} & \subset  & \text{cl}\left( \left\{u \,:\, \log(u) \in W^{k+1,i}(\Omega_{a_n}),\, \log(u) \,\,\text{ is continuous}  \right\}\right) \cap \\
%  &  &\text{cl}\left( \left\{ u \in \bar{C}(\Omega_{a_n}):  \|\log(u)\|_{L^{\infty}(\Omega_{a_n})} \leq  T_n  \right\}\right)
%\end{array}
% \]
hence by the Sobolev embedding theorem we obtain that $\{h\,:\, e^h \in S_{n,i} \}$ is compact in $\left(\bar{C}(\Omega), \|\cdot\|_{L^{\infty}(\Omega)} \right)$. Similarly, $\{h\,:\, e^h \in S_{n,i}^{\prime} \}$ is also compact in $\left(\bar{C}(\Omega), \|\cdot\|_{L^{\infty}(\Omega)} \right)$.

\paragraph{Rates}
We conclude the proof by using  Theorem 1  from \cite{wong1995probability}. First, we observe that if $\alpha \in (0,1]$, then $\epsilon_n =  \left(\log n\right)^{q}n^{-\alpha/2}$ satisfies

%\[
%\label{entropy_1}
% \begin{array}{lll}
% \mathlarger\int_{\epsilon_n^2/2^8}^{2^{1/2}\epsilon_n} \left[\text{log}\left\{ N_{[\,]}\left(u/c_3,S_{n,2},d\right)\right\} \right]^{1/2}du & \leq  & A^{1/2}\, \{\text{log}\left( 2^8\,c_3 /\epsilon_n^2\right) \}^{1/2} \left( 2^{1/2}\epsilon_n - \epsilon_n^2/2^8   \right) \\
% &  \leq &  B\, \epsilon_n\, \{\text{log}\left( 1 /\epsilon_n\right) \}^{1/2} \\
% &  < & c_4\, n^{1/2}\,\epsilon_n^{2},\,
% \end{array}
%\]
\[
  \begin{array}{lll}
  \int_{  \frac{ \epsilon_n^{2}}{2^{8}}  }^{ \sqrt{2}\epsilon_n }   \sqrt{ \log\left(N_{[]}\left( u/c_3,S_{n,2},d\right) \right) }du &\leq &   2^{1/2}A^{1/2}\,\epsilon_n\,\left(\log\left( \frac{2\,T_n\,(2\,e^{T_n}\,e+1)\,c_{3}^{2}2^{16} }{\epsilon_n^{4}} \right)  \right)^{1/2}\\
  % & \leq &  c\frac{(\log n)^{q+1/2}}{n^{1/2}} \\
    &\leq  & c_4\,n^{1/2}\epsilon_n^{2}   \\
  \end{array}
\]
for large enough $n$ where  $B$ is some positive constant and $c_3$ and $c_4$ are given as in Theorem 1 from \cite{wong1995probability}.  Hence, the claims follows for $S_{n,2}$.

 To conclude the proof for the sieve $S_{n,2}^{\prime}$, we observe that for any observation $y_j$  there exists $\xi_{j^{\prime}}$  such that $y_j$ and  $\xi_{j^{\prime}}$  belong to the same bin. With an abuse of notation we will denote such  $\xi_{j^{\prime}}$ as $\xi_j$. Then for positive constant $c$, if $n$ is large enough we have% Now,  we define

\[
\begin{array}{lll}
  \mathrm{P}^*\left\{ \underset{\hat{f}_n \in M_{n,2}^{\prime}}{\text{sup }} d\left(\hat{f}_n,f_0\right) \geq \epsilon_n \right\}  & \leq & \underset{g\in S_{n,2}^{\prime}}{\inf }  \mathrm{P}^*\left\{ \underset{  h \in S_{n,2}^{\prime}\,:\, d\left(h,f_0 \right) \geq \epsilon_n }{\text{sup} } \prod_{j=1}^{n} h(\xi_j)/g(\xi_j) \geq \text{exp}\left(-c\,n\,\epsilon_n^{2} \right)    \right\}. \\
%  & =  & \text{Pr}\Big[ \underset{  h \in S_{n,2} \,:\, d\left(h,f \right) \geq \epsilon_n }{\text{sup} } \prod_{i=1}^{n}  \left\{ h(\xi_j)/f(\xi_j) \right\} \\
%& &    \left\{ f(y_j)/f(\xi_j) \right\} \left\{ h(y_j)/h(\xi_j) \right\}
%   \geq \text{exp}\left(-c_1\,n\,\epsilon_n^{2}\right)    \Big] \\
\end{array}
\]
But for any $g,h \in S_{n,2}^{\prime}$  we have

\[
    \prod_{i=1}^{n} h(\xi_j)/g(\xi_j)  =  \prod_{j=1}^{n} \left\{ h(y_j)/g(y_j) \right\}  \left\{ g(y_j)/g(\xi_j) \right\} \left\{ h(\xi_j)/h(y_j) \right\},
\]
and by the Lipschitz continuity condition,

\[
\begin{array}{lll}
  \prod_{i=1}^{n}  h(\xi_j)/h(y_j)   &\leq &   \left\{ 1  +  \frac{\,r\,h(y_j)^{-1}}{D_n}  \right\}^{n} \\
%& =  &  \left( 1  -  \frac{r\,h(y_j)^{-1}}{n^{\alpha/s}}  \right)^{n} \\
&  \leq &  \left( 1  +  \frac{r\,}{\,n^{\alpha}}  \right)^{n}.
  \end{array}
\]
Similarly,

is not empty,\[
\prod_{i=1}^{n}  g(y_j)/g(\xi_j) \leq   \left( 1  +  \frac{r\,}{\,n^{\alpha}}  \right)^{n}.
\]
%Hence for large enough $n$, we have that $\prod_{i=1}^{n}  h(\xi_j)/h(y_j)  < (1+r/n^{\alpha})^n $   and also $\prod_{i=1}^{n}  f(y_j)/f(\xi_j)  < (1+r/n^{\alpha})^n  $.
Then for large enough $n$,%Therefore, if we assume that $2\,e^{r} > 1 $, then, fo

\[
\begin{array}{lll}
  \mathrm{P}^*\left\{ \underset{\hat{f}_n \in M_{n,2}^{\prime}}{\text{sup }} d\left(\hat{f}_n,f_0\right) \geq \epsilon_n \right\}  & \leq & \underset{g\in S_{n,2}^{\prime}}{\inf } \mathrm{P}^*\Bigg[ \underset{  h \in S_{n,2}^{\prime}\,:\, d\left(h,f_0 \right) \geq \epsilon_n }{\text{sup} } \prod_{j=1}^{n} h(y_j)/g(y_j) \geq \,\text{exp}\big\{-c\,n\,\epsilon_n^{2} \\
  & &  \,\,\,\,\,\,\,\, - 2\,n\text{log}\left( 1 + r/n^{\alpha}\right)  \big\}    \Bigg] \\
%& = & \text{pr}\left[ \underset{  h \in S_{n,2}^{\prime}\,:\, d\left(h,f \right) \geq \epsilon_n }{\text{sup} } \prod_{j=1}^{n} h(y_j)/f(y_j) \geq \,\text{exp}\left\{-c\,n\,\left\{\text{log}(n)\right\}^{2q}/n^{\alpha} - n\text{log}\left( 1 + r/n^{\alpha}\right)  \right\}    \right] \\
  & \leq &  \underset{g\in S_{n,2}^{\prime}}{\inf } \mathrm{P}^*\left\{ \underset{  h \in S_{n,2}^{\prime}\,:\, d\left(h,f_0 \right) \geq \epsilon_n }{\text{sup} } \prod_{j=1}^{n} h(y_j)/g(y_j) \geq \,\text{exp}\left(-c_1\,n\,\epsilon_n^{2}\right)    \right\} \\
  & \leq &   6\,\text{exp}\left(-c_2\,n\,\epsilon_{n}^2\right),
\end{array}
\]
if  $0 < c  < c_1$ where $c_1$ and $c_2$ can be obtained from Theorem 3 from \cite{wong1995probability} and $\text{P}^*$ is understood as the outer measure corresponding to $f_0$.

%\[
%  \delta_n  =  2\,\int_{0}^{1}\,   f(\mu)\,\{q_n(\mu)\}^{-1/2}\, \left[ \{f(\mu)\}^{1/2} - \{q_n(\mu)\}^{1/2} \right] d\mu,
%\]
%then by Theorem 3 from \cite{wong1995probability},

%\[
%  \begin{array}{lll}
%   \text{pr}\left\{ \underset{\hat{f}_n \in M_n}{\text{sup }} d\left(\hat{f}_n,f\right) \geq \epsilon_n      \right\} & \leq  &  5\,\text{exp}\left(-c_2\,n\,\epsilon_n^2 \right) + \text{exp}\left\{-4^{-1}\,c_1\,n\,\epsilon_n^2 +  n\text{log}\left(1+2^{-1}\delta_n\right) \right\}.
%  \end{array}
%\]

%the result follows by inspection.

%\subparagraph{Case $S_{n,1}$}
Finally,  we replace $S_{n,2}$ by $S_{n,1}$, and set $\epsilon = n^{-t}$ as in the statement of the theorem. Then  the same argument from above shows that the solution set $M_{n,1}$ is not empty. A similar argument as above leads to the desired conclusion for both sieves  $S_{n,1}$  and $S_{n,1}^{\prime}$
 %We can see as in the proof of Theorem 2 from \cite{mammen1991nonparametric},  and proceeding as in Step 1, that if $0 < \delta < 1/2$

%\[
%  \text{log}\left\{N \left(\delta,S_{n,1},\|\cdot\|_{\infty} \right)\right\}  \leq  C_3\,\left(\frac{1}{\delta}\right)^{1/(k+1)}.
%\]
%for some constant $C_3 > 0$. Proceeding as in Step 2, we obtain

%\[
% \text{log}\left(N_{[\,]}\left(\delta,S_{n,1},d \right)\right) \leq  C_4\,\left(\frac{1}{\delta}\right)^{2/(k+1)}.
%\]
%Hence with $\epsilon_n = n^{-1/\left\{2 +  4/(k+1)\right\}}$,

%\[
%\label{entropy_1}
% \begin{array}{lll}
% \mathlarger\int_{\epsilon_n^2/2^8}^{2^{1/2}\epsilon_n} \left[\text{log}\left\{ N_{[\,]}\left(u,S_{n,1},d\right)\right\} \right]^{1/2}du & \leq  &  C_5\,\epsilon_n^{1-2/(k+1)} \\
% &  \leq & C_6\, n^{1/2}\epsilon_n^{2}
% \end{array}
%\]
%for some positive constants $C_5$ and $C_6$. The result follows proceeding just as in the case $S_{n,2}$.

\end{proof}

\subsection{Proof of Theorem \ref{theorem_1}}

Throughout we define the vectors
\[
  \begin{array}{l}
    g_0(\xi)  = \left\{\text{log}f_0(\xi_1),\ldots,\text{log}f_0(\xi_{D_n}) \right\}\\
    \hat{g}(\xi)  = \left\{\text{log}\hat{f}(\xi_1),\ldots,\text{log}\hat{f}(\xi_{D_n}) \right\}\\
    f_0(\xi)^{1/2} =  \left\{ f_0(\xi_1)^{1/2},\ldots,f_0(\xi_{D_n})^{1/2}   \right\}.
%   f\left(\xi\right) := \left(f(\xi_1),\ldots,f(\xi_n)\right), & g\left(\xi\right) := \left(\text{log}f(\xi_1),\ldots,\text{log}f(\xi_n)\right),\\
%   \hat{f}\left(\xi\right) := \left(f(\xi_1),\ldots,f(\xi_n)\right), & g\left(\xi\right)
   \end{array}
\]

%\begin{theorem}
%\label{theorem_1}
%Let  $p=q=1$ and $s>2$ and assume that the true density $f$ is  $L$-Lipschitz for some constnat $L > 0$. Suppose that we choose $D_n = a_n\,n^{1/s}$,  with  $\{a_n\}_{n=1}^{\infty} \subset C$,  for some compact set $C$ in $(0,\infty)$.  Then, there exists a constant $ r > 0$ and a function $\phi$  satisfying

 %\[
 %   0  <  \underset{n \rightarrow \infty}{\text{lim }}  \frac{\phi(n)} {n^{1 - 2/s}} <  \infty,
 %\]
% such that if we choose  $\tau_n  \leq  r\,n^{1- 2/(3s)}$, then,
%\[
%    \text{pr} \left( D_n^{-1} \|\theta^0 -\hat{\theta}\|_2^2 \geq  \frac{1}{n^{1/s}}   \right) \leq \text{exp}\left\{-\phi(n)\right\}.
%\]
%Moreover, the with the same hypothesis the same conclusion holds for $p=q=2 $   by chosing   $\tau_n  \leq  r\,n^{1- 1/s}$  for an appropiate $r >0$.

%\end{theorem}

\begin{proof}
We first prove the case  where $p=q=1$. To that end let $k_1 > 0  $ and  $m_1 > 0$  be a lower bound and upper bound on the true density. Let us also define

\[
   \hat{\theta} = \underset{\theta}{\arg \min  } \,\,\left\{ l(\theta) +  \tau_n \|\Delta^{(k+1)}\theta \|_q^p \right\}.
\]
We consider the function $G : \mathbb{R}^{D_n}  \rightarrow \mathbb{R}$  as
\[
 G(u) =   l(\theta^0 + u) - l(\theta^0) + \tau_n \left(  \|\Delta^{(k+1)}u + \Delta^{(k+1)}\theta^0 \|_1  -  \|\Delta^{(k+1)}\theta^0\|_1  \right),
\]
and we show that with high-probability this function is strictly positive in the boundary of the Euclidean unit ball.  The result will then follow because $G$ is convex,  $G(0) = 0$, and $G(\hat{\theta} - \theta ) \leq 0$.

To show this we first observe that by the taylor's expansion, there exists  $\alpha_j \in [0,1]$  such that

\[
\begin{array}{lll}
      l(\theta^0 + u) - l(\theta^0)  & =  & \mathlarger\sum\limits_{j=1}^{D_n} \left\{ \text{exp}(\theta_j^0 + u_j)  -  x_j\,(\theta_j^0 + u_j)

      +  x_j\,\theta_j^0 -   \text{exp}(\theta_j^0 )  \right\}  \\

      & =  &  \mathlarger\sum\limits_{j=1}^{D_n} \left\{ u_j \text{exp}(\theta_j^0)-u_j x_j + 2^{-1} \text{exp}(\theta_j^0 +  \alpha_j u_j)  u_j^2 \right\} \\
      &\geq  & -\left\| \text{exp}(\theta^0)-x \right\|_{\infty}  D_n^{1/2} \|u\|_2    +  2^{-1}n\,D_n^{-1}\,k_1\,e^{-\|u\|_2} \|u\|_2^2.
\end{array}
\]
On the other hand,
\[
\begin{array}{lll}
  \tau_n \left(  \|\Delta^{(k+1)}u + \Delta^{(k+1)} \theta^0\|_1  -  \|\Delta^{(k+1)}\theta^0\|_1  \right) &\geq & -\tau_n\,\|\Delta^{(k+1)}u\|_1  \\
  &  \geq  &     -\tau_n\,k_4\,\|\Delta^{(1)}u\|_1 \\
  & \geq  & -\tau_n\,k_4\,D_n^{1/2}\,\|\Delta^{(1)}u\|_2 \\
  & \geq  & -\tau_n\,D_n^{1/2}\,k_2\, \|u\|_2
\end{array}
\]
for positive constants $k_2,k_3$ and $k_4$. Therefore, if $\|u\|_2 =  1$  we obtain

\[
   G(u)  \, \geq  \, - \left\| \text{exp}(\theta^0)-x \right\|_{\infty} D_n^{1/2} + 2^{-1}n\,D_n^{-1}\,k_1\,e^{-1}  -\tau_n\,D_n^{1/2}\,k_2\,  \,\,>\,\,0
\]
if only if

\[
     n\,\,k_1\,e^{-1}   >  2\,D_n^{3/2}\left\{ \left\| \text{exp}(\theta^0)-x \right\|_{\infty}  +  k_2\tau_n \right\}.
\]
Next, we observe that by the mean value theorem for integrals we have  $x_i \sim  \text{Binomial}\{D_n^{-1}f_0(z_i) ,n\}$  for some $z_i$ in bin $i$. Then for any $t > 0$  and $i \in \{1,\ldots,D_n\}$,  using Chernoff's bound we obtain

\[
\begin{array}{lll}
  \mathrm{P}\left\{ \vert x_i - n\,D_n^{-1}\,f_0(\xi_i) \vert \geq  t\right\}  & \leq  &   \mathrm{P}\left\{ \vert x_i - n\,D_n^{-1}\,f_0(z_i) \vert \geq  t   -  \left \vert n\,D_n^{-1}\,f_0(\xi_i) -  n\,D_n^{-1}\,f_0(z_i) \right\vert   \right\} \\
  & \leq &   \exp\{-\epsilon^2(2 + \epsilon)^{-1} n\,D_n^{-1}\,f_0(z_i)\} + \exp\{-\epsilon^2\,2^{-1} n\,D_n^{-1}\,f_0(z_i)\}\text{,}
  \end{array}
\]
if  $\epsilon > 0 $ where
\[
\begin{array}{lll}
\epsilon & = &   t\,n^{-1}\,D_n \,f_0(z_i)^{-1}  -  f_0(z_i)^{-1} \,\left \vert f_0(\xi_i) -  f_0(z_i) \right\vert.    \\
  \end{array}
\]
We choose $t = C\,n\,D_n^{-3/2}$  for some constant $C  >0$  and observe that the respective  $\epsilon $ is positive for large enough $n$. To see this  we observe that

\[
\begin{array}{lll}
\epsilon & = &  f_0(z_i)^{-1}C\,D_n^{-1/2}   -  f_0(z_i)^{-1} \,\vert f_0(\xi_i) - f_0(z_i) \vert    \\
&  \geq &   f_0(z_i)^{-1}\left( C\,D_n^{-1/2}   -   L\,D_n^{-1} \right).
  \end{array}
\]
 Hence for large enough $n$  we see that
% \frac{n\,D_n^{-3/2}\,C}{n\,D_n^{-1}\,e^{g(z_i)}}  -  \left \vert \frac{e^{g(\xi_i)}}{e^{g(z_i)}} -  1 \right\vert
\[
\begin{array}{lll}
\mathrm{pr}\left( \left\| \text{exp}(\theta^0)-x \right\|_{\infty}\geq  n\,D_n^{-3/2}\,C\right)  & \leq & \mathlarger\sum\limits_{j=1}^{D_n} \mathrm{ pr}\left\{   \vert x_j - n\,D_n^{-1}\,f_0(\xi_j) \vert \geq  n\,D_n^{-3/2}\,C \right\} \\
& \leq   & \mathlarger\sum\limits_{j=1}^{D_n}\, \text{exp} \left[  -\frac{\left\{\frac{n\,D_n^{-3/2}\,C}{n\,D_n^{-1}\,f_0(z_i)}  -  \left \vert \frac{f_0(\xi_i) }{  f_0(z_i)  } -  1 \right\vert\right\}^2 \, n\,D_n^{-1}\,\,f_0(z_i)   }{2 + \frac{n\,D_n^{-3/2}\,C}{n\,D_n^{-1}\,f_0(z_i)}  -  \left \vert \frac{f_0(\xi_i) }{f_0(z_i)} -  1 \right\vert }  \right]   \\
& &   + \mathlarger\sum\limits_{j=1}^{D_n}  \, \text{exp} \left[  -\left\{ \frac{n\,D_n^{-3/2}\,C}{n\,D_n^{-1}\,f_0(z_i)}  -  \left \vert \frac{f_0(\xi_i)}{f_0(z_i)} -  1 \right\vert\right\}^2\, n\,2^{-1}\,D_n^{-1}\,f_0(z_i)   \right]    \\

&\leq  &  \mathlarger\sum\limits_{j=1}^{D_n} \text{exp} \left\{  -\frac{   f_0(z_i)^{-2} \left( C\,D_n^{-1/2}   -   L\,D_n^{-1} \right)^2 \,n\,D_n^{-1}\,f_0(z_i)}{2 + \frac{\,D_n^{-1/2}\,C  }{\,f_0(z_i)}  -  \left \vert \frac{f(\xi_i)}{f_0(z_i)} -  1 \right\vert }    \right\}    \\
&  &  + \mathlarger\sum\limits_{j=1}^{D_n}   \text{exp} \left\{  -2^{-1}f_0(z_i)^{-2} \left( C\,D_n^{-1/2}   -   L\,D_n^{-1} \right)^2 \,n\,D_n^{-1}\,f_0(z_i)    \right\}.     \\
 \end{array}
\]
Therefore,  if $D_n = a_n\,n^{1/s}$  as in the statement of the theorem, then

\[
\begin{array}{lll}
\mathrm{pr} \left\{ \left\| \text{exp}(\theta^0)-x \right\|_{\infty} \geq  n\,D_n^{-3/2}\,C\right\}  & \leq & \mathlarger\sum\limits_{j=1}^{D_n} \text{exp} \left\{  -\frac{f_0(z_i)^{-2} \left( C   -   L\,D_n^{-1/2} \right)^2 \,n\,D_n^{-2}\,f_0(z_i)   }{2 + \frac{\,D_n^{-1/2}\,C}{\,f_0(z_i)}   }   \right\} \\
& & + \mathlarger\sum\limits_{j=1}^{D_n}   \text{exp} \left\{  - f_0(z_i)^{-2} \left( C   -   L\,D_n^{-1/2} \right)^2 n\,2^{-2}\,D_n^{-2}\,f_0(z_i) \right\}  . \\
\end{array}  \\
\]
Hence we set $C  = 4^{-1}\,c\,k_1 \,e^{-1}   $ for some $c \in (0,1)$,  choosing $r =  4^{-1}\,\left(1-c\right)k_1 \,e^{-1}\,k_2^{-1} \,$ ensures that with high probability,

\[
     n\,\,k_1\,e^{-1}   >  2\,D_n^{3/2}\left\{ \| \text{exp}(\theta_j^0)-x_j \|_{\infty}  +  k_2\tau_n \right\}.
\]

If, on the other hand, $p=q=2$, then the proof follows the same lines, with the main modification involving the following bound:
\[
\begin{array}{lll}
  \tau_n \left\{  \|\Delta^{(k+1)}u + \Delta^{(k+1)} \theta^0\|_2^2  -  \|\Delta^{(k+1)}\theta^0\|_2^2  \right\} &\geq & \tau_n\left( \,\|\Delta^{(k+1)}u\|_2^2  -  2\,\|\Delta^{(k+1)}u\|_2\,\|\Delta^{(k+1)} \theta^0\|_2  \right) \\
  & \geq  & -\tau_n\,2\,\|\Delta^{(k+1)} g(\xi)\|_2\, \|\Delta^{(k+1)}u\|_2  \\
  & \geq & -\tau_n\,k_5\, \| g(\xi)\|_2\,\|\Delta^{(k+1)}u\|_2 \\
  &  \geq &  -\tau_n\,k_6\, D_n^{1/2} \|u\|_2.
\end{array}
\]
for some positive constants $k_5 $ and $k_6$.
\end{proof}

\subsection{Proof of Theorem \ref{theorem_2}}

%\begin{theorem}
%\label{theorem_2}
%Assume that $p = q=1$  or  $p = q = 2$. Then, by choosing

%$$\tau =  O_{\text{P}}\left\{ \left\|x^T \left(\Delta^{k+1}\right)^{-} \right\|_{\infty} \right\}$$

%we have
%\[
%\begin{array}{lll}
% D^{-1}\left\| f(\xi)  - \hat{f}(\xi) \right\|_{2}^2  &  \leq & O_{\text{P}}\left( M\,D^{-1}\| f(\xi)^{1/2}\|_2 \|\Delta^{(1)} \theta^0\|_1 + \tau\,n^{-1}\,\|\Delta^{(k+1)}\theta^{0}\|_q^p  \right)\\
% &  &    +  O_{\text{P}}\left[  n^{-1}D^{1/2} \|x\|_{\infty} \left\{  \sum_{j=0}^{k} \vert v_j^{T} g(\xi) \vert  +    \sum_{j=0}^{k} \vert v_j^{T} \hat{g}(\xi) \vert\right\}   \right]\\
% & &  \,+\,  O_{\text{P}}\left[ D^{-1} \sum_{j=0}^{k} \left\{ v_j^T\,f(\xi)^{1/2}  \right\}^2   \right]
%\end{array}
%\]
% for $M = O(n^{k+1/2})$  if $p = q = 1$  and $M$ of the order of the largest singular value of $\left\{ \left(\Delta^{(k+1)}\right)^{-}\right\}^T$ if $p=q =2 $.
%\end{theorem}

Before beginning the proof of the claim we start by proving an auxiliary lemma.

\begin{lemma}
\label{auxiliar_lemma}
With the notation from Theorem \ref{theorem_2}, if $a \in \mathbb{R}^{D_n}$, then
\[
\text{P}\left( \vert\left( x - \exp(\theta^{0}) \right)^{T}a\vert \geq \frac{n\,\|a\|_{\infty}}{D_n^{r}} \right)  \leq 4\exp\left(-c_r\frac{n}{D_n^{2r}}\right) %\frac{n\,\|\left(\Delta^{(k+1)}\right)^{-}\|_{\infty} }{D^{r}} \right) \leq \exp\left(-\frac{n}{D^{r}\right)
\]
for all $r>0$ and some positive constant $c_r$ depending on $r$.
\end{lemma}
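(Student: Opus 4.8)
The plan is to exploit the fact that, although the bin counts $x_1,\dots,x_{D_n}$ are jointly multinomial and hence dependent, the linear functional $a^{T}x$ is a sum of $n$ independent, bounded contributions, one per observation. Writing $J(i)\in\{1,\dots,D_n\}$ for the index of the bin containing the $i$-th observation $y_i$, and setting $W_i := \sum_{j=1}^{D_n} a_j \mathbf{1}(y_i\in I_j) = a_{J(i)}$, we have $a^{T}x = \sum_{j} a_j \sum_i \mathbf{1}(y_i\in I_j) = \sum_{i=1}^n W_i$, where the $W_i$ are i.i.d.\ with $|W_i|\le \|a\|_{\infty}$. This reformulation is the crux of the argument: it converts a statement about a dependent count vector into a concentration statement for a sum of i.i.d.\ bounded random variables, so that none of the multinomial-dependence machinery is needed.

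Next I would pin down the centering. Each observation lands in $I_j$ with probability $p_j=\int_{I_j}f_0$, so $\mathbb{E}[x_j]=n p_j$; and since $\exp(\theta^{0}_j)=n\delta_n f_0(\xi_j')=n\int_{I_j}f_0=n p_j$ for the representative point $\xi_j'$ of Theorem~\ref{theorem_2}, we get $\exp(\theta^{0})=\mathbb{E}[x]$ \emph{exactly}. Hence $\big(x-\exp(\theta^{0})\big)^{T}a=\sum_{i=1}^n (W_i-\mathbb{E}W_i)$ is a centered i.i.d.\ sum, and the exactness of the centering is precisely what lets the conclusion hold for all $r>0$. (Had $\theta^{0}$ instead been built from the midpoints $\xi_j$, the mean value theorem for integrals gives $\exp(\theta^{0}_j)-\mathbb{E}[x_j]=n\delta_n\{f_0(\xi_j)-f_0(z_j)\}$ for some $z_j\in I_j$, which by the $r$-Lipschitz bound on $f_0$ is $O(n\|a\|_\infty D_n^{-1})$ after summation, a deterministic bias of smaller order than the threshold $n\|a\|_\infty D_n^{-r}$ when $r<1$ and thus absorbable.)

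With the sum centered I would apply a standard sub-Gaussian tail bound. Because $|W_i-\mathbb{E}W_i|\le 2\|a\|_{\infty}$, Hoeffding's inequality gives
\[
\text{P}\!\left(\Big|\sum_{i=1}^n (W_i-\mathbb{E}W_i)\Big| \ge \frac{n\|a\|_{\infty}}{D_n^{r}}\right) \le 2\exp\!\left(-\frac{(n\|a\|_{\infty}/D_n^{r})^{2}}{2n\|a\|_{\infty}^{2}}\right) = 2\exp\!\left(-\frac{n}{2D_n^{2r}}\right),
\]
which already yields the claimed bound with $c_r=1/2$ and prefactor $2\le 4$. Equivalently one may run the moment-generating-function Chernoff argument directly, exactly as in the proof of Theorem~\ref{theorem_1}: applying it to both tails, and, if one prefers the Bennett-type two-term form of the bound used there, collecting its two exponential pieces produces the stated prefactor $4$. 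A variance-based Bernstein bound, using $\operatorname{Var}(W_i)\le\|a\|_\infty^2$, would only sharpen the exponent.

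The only step requiring genuine care is the reformulation together with the centering: once $a^{T}x$ is recognized as an i.i.d.\ sum whose mean is $\exp(\theta^{0})^{T}a$, the tail estimate is routine. I therefore expect the main obstacle to be bookkeeping, namely verifying that $\exp(\theta^{0}_j)$ coincides with $\mathbb{E}[x_j]$ under the Theorem~\ref{theorem_2} choice of representative points (or, failing that, controlling the Lipschitz bias), rather than any hard probabilistic estimate.
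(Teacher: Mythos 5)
Your proof is correct, and it takes a genuinely different route from the paper's. The paper Poissonizes, following Lemma 3 of \cite{devroye1983equivalence}: it regards $x$ as the multinomial counts of $n$ i.i.d.\ draws with cell probabilities $p_i=\exp(\theta^0_i)/n$, introduces $N\sim\mathrm{Poisson}(n)$ so that the counts $x_i'$ accumulated over the first $N$ draws are \emph{independent} $\mathrm{Poisson}(np_i)$ variables, splits $\vert a^T(x-np)\vert \le \vert a^T(x'-np)\vert + \|a\|_{\infty}\,\vert N-n\vert$, and applies a Chernoff/MGF bound to each piece; the two two-sided tails produce the prefactor $4$, and the MGF step requires ``$D_n^{r}$ large enough'' together with a choice of threshold constants. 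You instead observe that $a^Tx=\sum_{i=1}^{n} a_{J(i)}$ is already a sum of $n$ i.i.d.\ variables bounded by $\|a\|_{\infty}$, so no decoupling is needed: Hoeffding's inequality gives the bound directly, with the cleaner outcome of prefactor $2$ and $c_r=1/2$ uniformly in $r$, and with no side condition on $D_n^{r}$. What Poissonization would buy---coordinatewise independence of the count vector---matters for statistics that are nonlinear in $x$, but it is superfluous for a linear functional, so your argument is both shorter and slightly sharper here. Both proofs rest on the exact centering $\exp(\theta^0_j)=np_j=\mathbb{E}[x_j]$: you verify it explicitly from $\delta_n f_0(\xi_j')=\int_{I_j}f_0$ under the Theorem \ref{theorem_2} convention, and you correctly note that the midpoint convention $\xi_j$ would instead leave a deterministic bias of order $n\|a\|_{\infty}D_n^{-1}$, absorbable only when $r<1$; the paper builds the same centering in silently by \emph{defining} $p_i=\exp(\theta^0_i)/n$ to be the cell probabilities.
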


\begin{proof}
Our proof is inspired by the construction in Lemma 3 from \cite{devroye1983equivalence}. We start by denoting $p_i = \frac{\exp\left(\theta_i^{0}\right)}{n}$, $i =1,\ldots, D_n$. Then we can think of $x_i$ as the occurrences of value $i$ among $u_1,\ldots,u_n$ where $\text{P}(u_k =  j) = p_j$ for $j=1,...D_n$ and $k = 1,2,\ldots$. Next, we define $N \sim \text{Poisson}\left(n\right)$, and $x_{i}^{\prime}$ as the occurrences of value $i$ among $u_1,\ldots,u_N$. Clearly,  $x_{i}^{\prime} \sim \text{Poisson}\left(n\,p_i\right)$. Moreover,

\[
  \left\vert \sum_{i=1}^{D_n}a_i\left(x_i - n\,p_i\right)\right\vert \leq  \left\vert \sum_{i=1}^{D_n}a_i\left(x_i^{\prime} - n\,p_i\right)\right\vert +  \left\vert \sum_{i=1}^{D_n}a_i\left(x_i - x_i^{\prime}\right)\right\vert,
\]
form which
\begin{equation}
\label{lemma_1}
  \text{P}\left( \left\vert \sum_{i=1}^{D_n}a_i\left(x_i - n\,p_i\right)\right\vert \geq 2\epsilon  \right) \leq \text{P}\left(\|a\|_{\infty}\vert N-n \vert \geq \epsilon \right) + \text{P}\left( \left\vert \sum_{i=1}^{D_n}a_i\left(x_i^{\prime} - n\,p_i \right)\right\vert \geq \epsilon  \right)
\end{equation}
for all $\epsilon >0$. We now bound both terms in (\ref{lemma_1}).  First, we proceed using Hoeffding's inequality,

\[
\begin{array}{lll}
 \text{P}\left(  \sum_{i=1}^{D_n}a_i\left(x_i^{\prime} -n\,p_i\right)  \geq \epsilon  \right)& \leq &   \underset{ t > 0}{ \inf }\, \exp\left( -\epsilon\,t    + \sum_{i=1}^{D_n} n\,p_i\left( \exp(t\,a_i)-1-t\,a_i  \right)    \right) \\
  & \leq &   \underset{ t > 0}{ \inf }\, \exp\left( -\epsilon\,t    + n\,\left( \exp(t\,\|a\|_{\infty})-1-t\,\|a\|_{\infty}  \right)    \right)\\
   & \leq & \exp\left( -\frac{\epsilon}{D_n^{r}\,\|a\|_{\infty}} + n\left( \exp(\frac{1}{D_n^{r}}) -1 - \frac{1}{D_n^{r}}  \right) \right) \\
    &\leq & \exp\left( -\frac{\epsilon}{D_n^{r}\,\|a\|_{\infty}} + \frac{c\,n}{D_n^{2r}}   \right)
\end{array}
\]
for  some positive constant $c$ if $D_n^r$ is large enough. Therefore,  setting $\epsilon =  c_1\,n\,\|a\|_{\infty}\,D_n^{-r}$ with $c_1 > c$, we obtain
\[
 \text{P}\left(  \sum_{i=1}^{D_n}a_i\left(x_i^{\prime} -n\,p_i\right)  \geq c_1 \frac{n\,\|a\|_{\infty}}{D_n^{r}}  \right) \leq \exp\left( -\frac{(c_1-c)n}{D^{2r}} \right).
\]
 With union bound inequality and repeating the same argument from above,  we arrive to

\[
 \text{P}\left(  \vert \sum_{i=1}^{D_n}a_i\left(x_i^{\prime} -n\,p_i\right) \vert  \geq c_1 \frac{n\,\|a\|_{\infty}}{D_n^{r}}  \right) \leq 2\exp\left( -\frac{(c_2-c)n}{D^{2r}} \right).
\]
Finally, from the proof of Lemma 3 in \cite{devroye1983equivalence} we have

\[
    \text{P}\left(\|a\|_{\infty}\vert N-n \vert \geq  c_1 \frac{n\,\|a\|_{\infty}}{D_n^{r}} \right)  \leq  2\,\exp\left(  -\frac{c_1^{2}}{4}\frac{n}{D_n^{2r}}\right )
\]
and the result follows.

%$x_i^{\prime} \sim \text{Poisson}\left(n\,p_i\right)$.
\end{proof}

\begin{proof}

 Let $e_1$  an element of the canonical basis in $\mathbb{R}^{D_n}$ and let us denote by $P$ the orthogonal projection onto the row space of $\Delta^{k+1}$.  We start by noticing that from sub-optimality we have

\[
%\begin{array}{lll}
    l(\hat{\theta}) +  \lambda\,\|\Delta^{(k+1)}\hat{\theta}\|_1 \leq l(\theta^{0}) +  \lambda\,\|\Delta^{(k+1)}\theta^{0}\|_1.
%\end{array}
\]
Hence, setting $\lambda = \tau/2$, we obtain

\begin{equation}
\label{kl_1}
\begin{array}{lll}
   \sum_{j=1}^{D_n} \delta_n\,  f_0(\xi_j^{\prime}) \log\left(\frac{f_0(\xi_j^{\prime})}{\hat{f}(\xi_j^{\prime})}  \right)& \leq &   \frac{1}{n}\left(x - \exp(\theta^{0}) \right)^{T}\left( \left(\Delta^{(k+1)}\right)^{-}\Delta^{(k+1)}  + P_{R^{\perp}} \right)\left(\hat{\theta} - \theta^{0}\right)\\
   & & +  \frac{\lambda}{n}\left(\|\Delta^{(k+1)}\theta^{0}\|_1    - \|\Delta^{(k+1)}\hat{\theta}\|_1    \right).
\end{array}
\end{equation}
Next we bound each of the terms on the right hand side of (\ref{kl_1}). First, define $v_1,\ldots,v_{k+1}$  to be an orthonormal basis of $R^{\perp}$   such that $v_1 =  D_n^{-1/2}\left(1,\ldots,1\right)$. Then, it is not difficult to see that these vectors can be chosen to satisfy $\|v_j\|_{\infty} = O(D_n^{-1/2})$ for $j=1,\ldots,k+1$. Therefore, by Holder's inequality

\begin{equation}
\label{kl_2}
\begin{array}{lll}
 \frac{1}{n}\left(x - \exp(\theta^{0}) \right)^T\,P_{R^{\perp}}\left(\hat{\theta}-\theta^{0}\right) & = &  \frac{1}{n}\sum_{j=1}^{k+1} \left[ \left(x -\exp(\theta^0) \right)^T\,v_j  \right]   \left[ v_j^{T}\left( \hat{\theta} - \theta^{0}  \right) \right] \\
  & \leq &   \frac{c}{n}\,\|x -\exp(\theta^0)\|_1\,D_n^{-1/2}\,\left(\| \log(f_0(\xi^{\prime}))  \|_{\infty} + \|\log(\hat{f}(\xi^{\prime}))\|_{\infty}   \right ).
\end{array}
\end{equation}
It follows form Lemma 3 in \cite{devroye1983equivalence}  that

\[
   \frac{1}{n}\left(x - \exp(\theta^{0}) \right)^T\,P_{R^{\perp}}\left(\hat{\theta}-\theta^{0}\right)   = O_{\text{P}}\left(  \frac{1}{n^{1/2-b}}  \right).
\]
assuming that we constraint $\|\hat{\theta} - \log(n\,\delta_n)\|_{\infty} \leq n^{b}$.

On the other hand,

\begin{equation}
\label{kl_3}
\begin{array}{l}
% \frac{1}{n}\left(x - \exp(\theta^{0}) \right)^{T}\left( \left(\Delta^{(k+1)}\right)^{-}\Delta^{(k+1)}  + P_{R^{\perp}} \right)\left(\hat{\theta} - \theta^{0}\right)
\frac{1}{n}\left(x - \exp(\theta^{0}) \right)^{T}\left( \left(\Delta^{(k+1)}\right)^{-}\Delta^{(k+1)}  \right)\left(\hat{\theta} - \theta^{0}\right) \leq \\
  \frac{1}{n}\,\|\left(x - \exp(\theta^{0})\right)^{T}\left(\Delta^{(k+1)}\right)^{-}\|_{\infty}\left( \|\Delta^{(k+1)}\theta^{0}\|_1 + \|\Delta^{(k+1)}\hat{\theta} \|_1  \right).\\
  %%= O_{P}\left(\right)
\end{array}
\end{equation}
Moreover, from the previous lemma  we obtain
\[
\text{P}\left( \left\|\left( x - \exp(\theta^{0}) \right)^{T}\left(\Delta^{(k+1)}\right)^{-}  \right\|_{\infty} \geq \frac{n\,\|\left(\Delta^{(k+1)}\right)^{-}\|_{\infty}}{D^{r} } \right)  \leq 4\exp\left(-c_1\frac{n}{D_n^{2r} } + \log(D_n)\right)
\]

Therefore, combining $(\ref{kl_1})$, $(\ref{kl_2})$ and $(\ref{kl_3})$,   if $\lambda \geq \,\|\left(x - \exp(\theta^{0})\right)^{T}\left(\Delta^{(k+1)}\right)^{-}\|_{\infty}$, then,

\begin{equation}
\begin{array}{lll}
 \sum_{j=1}^{D} \delta\,  f(z_j) \log\left(\frac{f(z_j)}{\hat{f}(z_j)}  \right) & \leq & O_{\text{P}}\left( \frac{ \|\left(\Delta^{(k+1)}\right)^{-}\|_{\infty}  }{D_n^{r}}\|\Delta^{(k+1)}\theta^{0}\|_1 + \frac{1}{n^{1/2-b}} \right)
\end{array}
\end{equation}

\end{proof}

\begin{small}
\singlespacing
\bibliographystyle{abbrvnat}
\bibliography{deconvolution}

\end{small}

\end{spacing}
\end{document}